\DeclareMathAlphabet{\mathcalligra}{T1}{calligra}{m}{n}
\numberwithin{equation}{section}
\renewcommand{\vec}{\mathbf}
\newcommand{\ZZ}{\mathbb{Z}}
\newcommand{\QR}{\mathbb{QR}}
\newcommand{\fname}{\mathsf}
\newcommand{\aname}[1]{\mathbf{\fname{#1}}}
\newcommand{\vname}{\mathsf}
\newcommand{\ind}{\approx}
\newcommand{\cind}{\underset{C}{\ind}}
\newcommand{\defeq}{\triangleq}
\newcommand{\samplefrom}{\xleftarrow{\$}}
\newcommand{\adv}[2]{\mathbf{Adv}_{#1}^{#2}}
\newcommand{\exper}[2]{\mathbf{Exp}_{#1}^{#2}}
\newcommand{\pr}{\fname{Pr}}
\renewcommand{\implies}{\Rightarrow}
\def\legendre@dash#1#2{\hb@xt@#1{%
  \kern-#2\p@
  \cleaders\hbox{\kern.5\p@
    \vrule\@height.2\p@\@depth.2\p@\@width\p@
    \kern.5\p@}\hfil
  \kern-#2\p@
  }}
\def\@legendre#1#2#3#4#5{\mathopen{}\left(
  \sbox\z@{$\genfrac{}{}{0pt}{#1}{#3#4}{#3#5}$}%
  \dimen@=\wd\z@
  \kern-\p@\vcenter{\box0}\kern-\dimen@\vcenter{\legendre@dash\dimen@{#2}}\kern-\p@
  \right)\mathclose{}}
\newcommand\legendre[2]{\mathchoice
  {\@legendre{0}{1}{}{#1}{#2}}
  {\@legendre{1}{.5}{\vphantom{1}}{#1}{#2}}
  {\@legendre{2}{0}{\vphantom{1}}{#1}{#2}}
  {\@legendre{3}{0}{\vphantom{1}}{#1}{#2}}
}
\def\dlegendre{\@legendre{0}{1}{}}
\def\tlegendre{\@legendre{1}{0.5}{\vphantom{1}}}
\newcommand{\jacobi}[2]{\genfrac(){}{0}{#1}{#2}}
\begin{document}
%
\title{Homomorphic Encryption with Access Policies: Characterization
and New Constructions\footnote{A preliminary version of this work appeared in Africacrypt 2013 \cite{CHT2013af}. This is the full version.}}

\author{Michael Clear\footnotemark[1], Arthur Hughes, and Hitesh Tewari}
\institute{School of Computer Science and Statistics,\\Trinity College
Dublin}

\footnotetext[1]{The author's work is funded by the Irish Research Council EMBARK Initiative.}

\maketitle

\begin{abstract}
A characterization of predicate encryption (PE) with support for
homomorphic operations is presented and we describe the homomorphic properties of some
existing PE constructions. Even for the special case of IBE, there are few
known group-homomorphic cryptosystems.
Our main construction is an XOR-homomorphic IBE scheme based on the
quadratic residuosity problem (variant of the Cocks' scheme), which we
show to be strongly homomorphic. We were unable to construct an anonymous variant that preserves this
homomorphic property, but we achieved anonymity for a weaker notion of homomorphic
encryption, which we call \emph{non-universal}.  A related security
notion for this weaker primitive is formalized. Finally, some
potential applications
and open problems are considered.
\end{abstract}

\section{Introduction}
There has been much interest recently in encryption schemes with
homomorphic capabilities. Traditionally, malleability was avoided to
satisfy strong security definitions, but many applications have been
identified for cryptosystems supporting homomorphic operations. More
recently, Gentry \cite{Gentry2009} presented the first fully-homomorphic
encryption (FHE) scheme, and several improvements and variants have since appeared in
the literature \cite{Smart10,Dijk10,BV11a,Brakerski2011b}. There are however many
applications that only require a scheme to support a single
homomorphic operation. Such schemes are referred to as \emph{partial
  homomorphic}. Notable examples of unbounded homomorphic cryptosystems include
Goldwasser-Micali \cite{GM82} (XOR), Paillier
\cite{Paillier:1999:PKC}  and ElGamal
\cite{ElGamal:1985:PKCa}.

Predicate Encryption (PE) \cite{Katz:2008} enables a sender to embed a hidden
descriptor
within a ciphertext that consists of attributes describing the message
content. A 
Trusted Authority (TA) who manages the system issues secret keys to users
corresponding to predicates. A user can decrypt a ciphertext
containing a descriptor $\vec{a}$ if and only if he/she has a secret
key for a predicate that evaluates to true for
$\vec{a}$. This construct turns out to be quite powerful, and
generalizes many encryption primitives. It facilitates
expressive fine-grained access control i.e. complex policies can be defined
restricting the recipients who can decrypt a message. It also facilitates
the evaluation of complex queries on data such as range, subset
and search queries. Extending the class of supported predicates for known
schemes is a topic of active research at present. 

PE can be
viewed in two ways. It can be viewed as a means to delegate
computation to a third party i.e. allow the third party to perform a
precise fixed function on the encrypted data, and thus limit what the third
party learns about the data. In the spirit of this viewpoint, a
generalization known as Functional Encryption has been proposed
\cite{Boneh11}, which allows general
functions to be evaluated.

PE can also be viewed as a means to achieve more fine-grained access
control. It enables a stronger separation between sender and recipient 
since the former must only describe the content of the message or more
general conditions
on its access while decryption then depends on whether a recipient's access
policy matches these conditions.

Why consider homomorphic encryption in the PE setting? It is
conceivable that in a multi-user environment such as a large organization, certain computations may be delegated to the cloud whose
inputs depend on the work of multiple users distributed within that organization. Depending on
the application, the circuit to be computed may be chosen or adapted
by the cloud provider, and thus is not fixed by the delegator as in
primitives such as non-interactive
\emph{verifiable computing} \cite{Gennaro:2010}. Furthermore, the
computation may depend on data sets provided by multiple independent
users. Since the data is potentially sensitive, the organization's
security policy may dictate that all data must be encrypted. Accordingly,
each user encrypts her data with a PE scheme using
relevant attributes to describe it. She then sends the ciphertext(s)
to the cloud. It is desirable that the results of the computation
returned from the cloud be decryptable only by an entity whose access
policy (predicate) satisfies the attributes of \emph{all} data sets
used in the computation. Of course a public-key homomorphic scheme
together with a PE scheme would be sufficient if the senders were able to
interact before contacting the cloud, but we would like to remove this
requirement since the senders may not be aware of each other. This
brings to mind the recent notion of multikey homomorphic
encryption presented by L\'{o}pez-Alt, Tromer and Vaikuntanathan
\cite{Lopez-Alt:2012}.

Using a multikey homomorphic scheme, the
senders need not interact with each other before evaluation takes
place on the cloud. Instead, they must run an MPC decryption protocol to jointly decrypt the
result produced by the cloud. The evaluated ciphertexts in the scheme
described in
\cite{Lopez-Alt:2012} do not depend on the circuit size, and depend
only polynomially on the security parameter and the number of parties who contribute inputs to the circuit. Therefore, the problem
outlined above may be solved with a multikey fully homomorphic scheme
used in conjunction with a PE scheme if we accept the evaluated
ciphertext size to be polynomial in the number of parties. In this
work, we are concerned with a ciphertext size that is independent of
the number of parties. Naturally, this limits the composition of
access policies, but if this is acceptable in an application, there
may be efficiency gains over the combination of multikey FHE and PE.

In summary, homomorphic encryption in the PE setting is desirable if
there is the possibility of multiple parties in a large organization (say) sending encrypted data to
a semi-trusted
\footnote{We assume all parties are semi-honest.} evaluator and access policies are required to
appropriately limit access to the results, where the ``composition''
of access policies is ``lossy''. We assume the semi-honest model in
this paper; in particular we do not consider verifiability of the
computation.

The state of affairs for homomorphic encryption even for the
simplest special case of PE, namely identity-based encryption (IBE),
leaves open many challenges. At his talk at Crypto 2010, Naccache
\cite{NaccacheTalk10} mentioned
``identity-based fully homomorphic encryption'' as one of a list of theory
questions. Towards this goal, it has been pointed out in \cite{Brakerski:2011_v1} that some
LWE-based FHE constructions can be modified to obtain a weak form of an
identity-based FHE scheme using the trapdoor functions from
\cite{GPV}; that is, additional information is
needed (beyond what can be non-interactively derived from a user's
identity) in order to evaluate certain circuits and to perform bootstrapping. Therefore, the valued
non-interactivity property of IBE is lost whereby no communication
between encryptors and the TA is needed. To the best of our knowledge,
fully-homomorphic or even ``somewhat-homomorphic'' IBE remains
open, and a variant of the BGN-type scheme of Gentry, Halevi and
Vaikuntanathan \cite{Gentry:2010} is the only IBE scheme that can compactly
evaluate quadratic formulae (supports 2-DNF).

As far as the authors are aware, there are no $(\ZZ_N, +)$ (like Paillier) or
$(\ZZ_p^\ast, \ast)$ (like ElGamal) homomorphic IBE schemes. Many
pairings-based IBE constructions admit multiplicative homomorphisms which give us a limited additive homomorphism for small ranges;
that is, a discrete logarithm problem must be solved to recover the
plaintext, and the complexity thereof is $O(\sqrt{M})$, where $M$ is
the size of the message space. Of a similar variety are public-key schemes
such as BGN \cite{BGN05} and Benaloh \cite{Benaloh:1994}. It remains
open to construct an unbounded additively homomorphic IBE scheme for a
``large'' range such as Paillier \cite{Paillier:1999:PKC}. Possibly a
fruitful step in this direction would be to look at Galbraith's
variant of Paillier's cryptosystem based on elliptic curves over rings \cite{Galbraith:2002}.

One of the contributions of this paper is to construct an additively
homomorphic IBE scheme for $\ZZ_2$, which is usually referred to as
XOR-homomorphic. XOR-homomorphic schemes such as Goldwasser-Micali
\cite{GM82} have been used in many practical applications including
sealed-bid auctions, biometric authentication and as the building
blocks of protocols such as private information retrieval, and it
seems that an IBE XOR-homomorphic scheme may be useful in some of
these scenarios.

We faced barriers however trying to make our XOR-homomorphic scheme
anonymous. The main obstacle is that the homomorphism depends on the
public key.  We pose as an open problem the task of constructing a
variant that achieves anonymity and retains the homomorphic property. Inheriting the terminology of Golle et al. \cite{Golle:2002} (who
refer to re-encryption without the public key as \emph{universal}
re-encryption), we designate homomorphic evaluation in a
scheme that does not require knowledge of the public key
as \emph{universal}. We introduce a weaker primitive that explicitly
requires additional information to be passed to the homomorphic
evaluation algorithm. Our construction can be made anonymous and
retain its homomorphic property in this context; that is, if the attribute (identity in the case
of IBE) is known to an
evaluator. While this certainly is not ideal, it may be
plausible in some scenarios that an evaluator is allowed to be privy to the
attribute(s) encrypted by the
ciphertexts, and it is other parties in the system to whom the
attribute(s) must remain concealed.  An adversary sees incoming and
outgoing ciphertexts, and can potentially request evaluations on
arbitrary ciphertexts. We call such a variant \emph{non-universal}.
We propose a syntax for a non-universal homomorphic primitive and
formulate a security notion to capture attribute-privacy in this
context.

\subsection{Related Work}
There have been several endeavors to characterize homomorphic
encryption schemes. Gj\o steen \cite{Gjosteen:2005} succeeded in
characterizing many well-known group homomorphic cryptosystems by
means of an abstract construction whose security rests on the hardness
of a subgroup membership problem. More recently, Armknecht,
Katzenbeisser and Peter \cite{Armknecht:2012} gave a more complete
characterization and generalized Gj\o steen's results
to the IND-CCA1 setting. However, in this work, our focus is at a higher level
and not concerned with the underlying algebraic structures. In
particular, we do not require the homomorphisms to be
unbounded since our aim to provide a more general
characterization for homomorphic encryption in the PE setting. Compactness, however, is required; that is, informally, the length of
an evaluated ciphertext should be independent of the \emph{size} of
the computation.

The notion of receiver-anonymity or key-privacy was
formally established by Bellare et al. \cite{Bellare01}, and the
concept of universal anonymity (any user can anonymize a ciphertext)
was proposed in \cite{Hayashi:2005}. The first universally
anonymous IBE scheme appeared in \cite{AG09}. Prabhakaran and Rosulek \cite{Prabhakaran:2008} consider receiver-anonymity for their definitions
of homomorphic encryption.

Finally, since Cocks' IBE scheme \cite{COCKS01} appeared, variants have been proposed
(\cite{BGH07} and \cite{AG09}) that achieve anonymity and improve
space efficiency. However, the possibility of constructing a homomorphic
variant has not received
attention to date.

\subsection{Organization}
Notation and background definitions are set out in Section
\ref{sec:prelim}. Our characterization of homomorphic predicate
encryption is specified in Section \ref{sec:hpe}; the syntax,
correctness conditions and security notions are established, and the
properties of such schemes are analyzed. In Section \ref{sec:agg},
some instantiations are given based on inner-product PE
constructions. Our main construction, XOR-homomorphic IBE, is
presented in Section \ref{sec:cocks}. Non-universal homomorphic
encryption and the abstraction of universal anonymizers is presented
in Section \ref{sec:anon} towards realizing anonymity for our
construction in a weaker setting. Conclusions and future work are presented in
Section \ref{sec:conclusions}.

\section{Preliminaries}\label{sec:prelim}
A quantity is said to be negligible with respect to some parameter
$\lambda$, written $\fname{negl}(\lambda)$, if it is asymptotically
bounded from above by the reciprocal of all polynomials in $\lambda$.

For a probability distribution $D$, we denote by $x
\samplefrom D$ that $x$ is sampled according to $D$. If $S$ is a set,
$y \samplefrom S$ denotes that $y$ is sampled from $x$ according to
the uniform distribution on $S$.

The support of a predicate $f : A \to \{0, 1\}$ for some domain $A$ is
denoted by $\fname{supp}(f)$, and is defined by the set $\{a \in A :
f(a) = 1\}$.

\begin{definition}[Homomorphic Encryption]
A homomorphic encryption scheme with message space $M$ supporting a
class of $\ell$-input circuits $\mathbb{C} \subseteq M^{\ell}
\to M$ is a tuple of PPT algorithms
$(\aname{Gen}, \aname{Enc}, \aname{Dec}, \aname{Eval})$
satisfying the property:

$\forall (\vname{pk}, \vname{sk}) \gets \aname{Gen}(1^\lambda),\;
\forall C \in \mathbb{C},
\forall m_1, \hdots, m_{\ell} \in M$

$\forall c_1, \hdots, c_{\ell} \gets
\aname{Enc}(\vname{pk}, m_1), \hdots, \aname{Enc}(\vname{pk},
m_{\ell})$
\begin{equation*}
 C(m_1, \hdots, m_{\ell}) =
\aname{Dec}(\vname{sk}, \aname{Eval}(\vname{pk}, C, c_i, \hdots,
c_{\ell}))
\end{equation*}
\end{definition}

The following definition is based on \cite{GoldwasserHELecture11},
\begin{definition}[Strongly Homomorphic]
Let $\mathcal{E}$ be a homomorphic encryption scheme with message
space $M$ and class of supported circuits $\mathbb{C} \subseteq \{M^{\ell}
\to M\}$. $\mathcal{E}$ is said to be \emph{strongly homomorphic} iff $\; \forall C \in \mathbb{C},
\;  \forall (\vname{pk}, \vname{sk}) \gets \aname{Gen},\;\forall m_1,
\hdots, m_{\ell} ,  \; \forall c_1, \hdots, c_{\ell} \gets
\aname{Enc}(\vname{pk}, m_1), \hdots, \aname{Enc}(\vname{pk}, m_{\ell})$, the following distributions are statistically indistinguishable
\begin{eqnarray*}
\aname{Enc}(\vname{pk}, C(m_1, \hdots, m_{\ell})) & \ind &
(\aname{Eval}(\vname{pk}, C, c_1, \hdots, c_{\ell}).
\end{eqnarray*}
\end{definition}

\begin{definition}[Predicate Encryption (Adapted from \cite{Katz:2008} Definition 1)]\label{def:pe}
A predicate encryption (PE) scheme for the class of predicates
$\mathcal{F}$ over the set of attributes $A$ and with message space $M$ consists of four
algorithms $\aname{Setup}$, $\aname{GenKey}$, $\aname{Encrypt}$,
$\aname{Decrypt}$ such that:
\begin{itemize}
\item
$\mathbf{\aname{PE.Setup}}$ takes as input the security parameter $1^\lambda$ and
outputs public parameters $\fname{PP}$ and master secret key
$\fname{MSK}$.
\item
$\mathbf{\aname{PE.GenKey}}$ takes as input the master secret key $\fname{MSK}$ and
a description of a predicate $f \in \mathcal{F}$. It outputs a key
$\fname{SK}_f$.
\item
$\mathbf{\aname{PE.Encrypt}}$ takes as input the public parameters $\fname{PP}$,
a message $m \in M$ and an attribute $a \in A$. It returns a
ciphertext $c$. We write this as $c \leftarrow
\aname{Encrypt}(\vname{PP}, a, m)$.
\item
$\mathbf{\aname{PE.Decrypt}}$ takes as input a secret key $\fname{SK}_f$ for a
predicate $f$ and a
ciphertext $c$. It outputs $m$ iff $f(a) = 1$. Otherwise it outputs a
distinguished symbol $\bot$ with all but negligible probability.
\end{itemize}

\end{definition}

\begin{remark}
Predicate Encryption (PE) is known by various terms in the
literature. PE stems from Attribute-Based Encryption (ABE) with Key
Policy, or simply KP-ABE, and differs from it in its support for attribute
privacy. As a result, ``ordinary'' KP-ABE is sometimes known as PE
with public index. Another variant of ABE is CP-ABE (ciphertext
policy) where the encryptor embeds her access policy in the ciphertext
and a recipient must possess sufficient attributes in order to
decrypt. This is the reverse of KP-ABE. In this paper, the
emphasis is placed on PE with
its more standard interpretation, namely KP-ABE with attribute
privacy.
\end{remark}

\section{Homomorphic Predicate Encryption}\label{sec:hpe}
\subsection{Syntax}
Let $M$ be as message space and let $A$ be a set of
attributes. Consider a set of operations $\Gamma_M \subseteq \{M^2
\to M\}$ on the message space, and a set of operations $\Gamma_A
\subseteq \{A^2 \to A\}$ on the attribute space. We denote by $\gamma = \gamma_A \times \gamma_M$ for some $\gamma_A \in \Gamma_A$ and $\gamma_M \in \Gamma_M$ the operation $(A \times M)^2 \to (A \times M)$ given by $\gamma((a_1, m_1), (a_2, m_2)) = (\gamma_A(a_1, a_2), \gamma_M(m_1, m_2))$.  Accordingly, we define the
set of permissible ``gates'' $\Gamma \subseteq \{\gamma_A \times \gamma_M :
\gamma_A \in \Gamma_A, \gamma_M \in \Gamma_M\} \subseteq \{(A \times M)^2 \to (A \times M)\}$\footnote{It is assumed that $\Gamma_A$ and
  $\Gamma_M$ are minimal insofar as $\forall \gamma_A \in \Gamma_A
  \exists \gamma_M \in \Gamma_M \text{ s.t. } \gamma_A \times
  \gamma_M \in \Gamma$ and the converse also holds. In particular, we
  later assume this of $\Gamma_A$.}. Thus, each operation on the plaintexts is
associated with a single (potentially distinct) operation on the
attributes. Finally, we can specify a class of permissible circuits
$\mathbb{C}$ built from
$\Gamma$.

\begin{definition}\label{def:hpe}
A homomorphic predicate encryption (HPE) scheme for the non-empty class of predicates
$\mathcal{F}$, message space $M$, attribute space $A$, and class
of $\ell$-input  circuits $\mathbb{C}$ consists of a tuple
of five PPT algorithms $\aname{Setup}$, $\aname{GenKey}$, $\aname{Encrypt},
\aname{Decrypt}$ and $\aname{Eval}$. such that:
\begin{itemize}
\item
$\mathbf{\aname{HPE.Setup}}$, $\mathbf{\aname{HPE.GenKey}}$,
$\mathbf{\aname{HPE.Encrypt}}$ and $\mathbf{\aname{HPE.Decrypt}}$ are
as specified in Definition \ref{def:pe}.
\item
$\mathbf{\aname{HPE.Eval}}(\fname{PP}, C, c_1, \hdots, c_\ell)$ takes as input the public parameters
$\fname{PP}$, an $\ell$-input circuit $C \in \mathbb{C}$, and
ciphertexts  $c_1 \gets \aname{HPE.Encrypt}(\vname{PP}, a_1, m_1), \hdots,
c_\ell \gets \aname{HPE.Encrypt}(\vname{PP}, a_\ell,
m_\ell)$.

It outputs a ciphertext that encrypts the
attribute-message pair $C((a_1, m_1), \hdots, (a_\ell, m_\ell))$.

\end{itemize}

\end{definition}

Accordingly, the correctness criteria are defined as follows:

\textbf{Correctness conditions:}

For any $(\vname{PP}, \vname{MSK}) \leftarrow
\aname{HPE.Setup(1^\lambda})$, $f \in
\mathcal{F}$, $\vname{SK}_{f} \leftarrow \aname{HPE.GenKey}(\vname{PP},
\vname{MSK}, f)$, $C \in \mathbb{C}$: 
\begin{enumerate}
\item
For any $a \in A, m \in M, c \leftarrow
\aname{HPE.Encrypt}(\fname{PP}, m, a)$:

\begin{equation*}
\aname{HPE.Decrypt}(\fname{SK}_{f}, c) = m \iff f(a) = 1
\end{equation*}
\item
$\forall m_1, \hdots, m_\ell \in M, \; \forall a_1, \hdots, a_\ell
\in A, \;$ $\forall c_1, \hdots, c_\ell \leftarrow \aname{HPE.Encrypt}(\fname{PP},
a_1, m_1), \hdots, \aname{HPE.Encrypt}(\fname{PP}, a_\ell, m_\ell):$\\

$\forall c^\prime \leftarrow \aname{HPE.Eval}(\fname{PP}, C, c_1,
\hdots, c_\ell)$
\begin{enumerate}
\item
\begin{equation*}
\aname{HPE.Decrypt}(\fname{SK}_{f}, c^\prime) = m^\prime \iff f(a^\prime) = 1
\end{equation*}
where $(m^\prime, a^\prime) = C((a_1, m_1), \hdots, (a_\ell, m_\ell))$
\item
\begin{equation*}
|c^\prime| < L(\lambda)
\end{equation*} 

where $L(\lambda)$ is a fixed polynomial derivable from $\vname{PP}$.
\end{enumerate}
\end{enumerate}

The special case of ``predicate only'' encryption \cite{Katz:2008} that excludes
plaintexts (``payloads'') is modelled by setting $M \defeq \{\vec{0}\}$ for a distinguished symbol $\vec{0}$,
and setting $\Gamma \defeq \{\gamma_A \times \mathbf{id_M} : \gamma_A \in
\Gamma_A\}$ where $\mathbf{id_M}$ is the identity operation on $M$.

\subsection{Security Notions}
The security notions we consider carry over from the standard notions
for PE. The basic
requirement is IND-CPA security, which is referred to as
``payload-hiding''. A stronger notion is ``attribute-hiding'' that
additionally entails indistinguishability of attributes. The
definitions are game-based with non-adaptive and adaptive
variants. The former prescribes that the adversary choose its
target attributes at the beginning of the game before seeing the
public parameters, whereas the latter allows the adversary's choice to
be informed by the public parameters and secret key queries.
\begin{definition}\label{def:secgame}
A (H)PE scheme $\mathcal{E}$ is said to be
(fully) attribute-hiding (based on Definition 2 in \cite{Katz:2008}) if an
adversary $\mathcal{A}$ has negligible advantage in the following game:

\begin{enumerate}
\item
In the \textbf{non-adaptive} variant, $\mathcal{A}$ outputs two attributes $a_0$
and $a_1$ at the beginning of the game.
\item
The challenger $\mathcal{C}$ runs Setup($1^\lambda$) and outputs $(\fname{PP},
\fname{MSK})$
\item
\textbf{Phase 1}

$\mathcal{A}$ makes adaptive queries for the secret keys for
predicates $f_1, \hdots, f_k \in \mathcal{F}$ subject to the
constraint that $f_i(a_0) = f_i(a_1)$ for $1 \leq i \leq k$.
\item
\begin{remark} In the stronger \emph{adaptive} variant, $\mathcal{A}$ only
  chooses attributes $a_0$ and $a_1$ at this stage. \end{remark}
\item
$\mathcal{A}$ outputs two messages $m_0$ and $m_1$ of equal length. It
must hold that $m_0 = m_1$ if there is an $i$ such that $f_i(a_0) =
f_i(a_1) = 1$.
\item
$\mathcal{C}$ chooses a random bit $b$, and outputs $c \leftarrow
\aname{Encrypt}(\fname{PP}, a_b, m_b)$
\item
\textbf{Phase 2}

A second phase is run where $\mathcal{A}$ requests secret keys for
other predicates subject to the same constraint as above.
\item
Finally, $\mathcal{A}$ outputs a guess $b^\prime$ and is said to win
if $b^\prime = b$.
\end{enumerate}
\end{definition}

A weaker property referred to as \emph{weakly} attribute-hiding \cite{Katz:2008} requires
that the adversary only request keys for predicates $f$ obeying
$f(a_0) = f(a_1) = 0$.

We propose another model of security for non-universal homomorphic
encryption in Section \ref{sec:anon}.

\subsection{Attribute Operations}
We now characterize HPE schemes based on the properties of their
attribute operations (elements of $\Gamma_A$).
\begin{definition}[Properties of attribute operations]
$\forall f \in \mathcal{F}, \quad \forall a_1, a_2 \in A, \quad
\forall \gamma_A \in \Gamma_A$:
\begin{enumerate}
\item
\begin{equation}\label{eq:minimal}
f(\gamma_A(a_1, a_2)) \implies f(a_1) \land f(a_2)
\end{equation}
\emph{(Necessary condition for IND-CPA security)}

\item
\emph{}
\begin{equation}\label{eq:idempotence}
f(\gamma_A(a_1, a_1)) = f(a_1)
\end{equation}

\item
$\forall d \in A$:

\begin{eqnarray*}
f(a_1) = f(a_2) & \implies & f(\gamma_A(d, a_1)) = f(\gamma_A(d, a_2))\\
& \land &  f(\gamma_A(a_1, d)) = f(\gamma_A(a_2, d))
\end{eqnarray*}
\begin{equation}\label{eq:nonmonotone}
\end{equation}
\emph{(Non-monotone Indistinguishability)}

\item
\begin{equation}\label{eq:monotone}
f(\gamma_A(a_1, a_2)) = f(a_1) \land f(a_2)
\end{equation}
\emph{(Monotone Access)}

\end{enumerate}
\end{definition}

Property \ref{eq:minimal} is a minimal precondition for
payload-hiding i.e. IND-CPA security under both adaptive and
non-adaptive security definitions. 

Property \ref{eq:idempotence} preserves
access under a homomorphic operation on ciphertexts with the same attribute.

Property \ref{eq:nonmonotone} is a necessary condition for full
attribute-hiding.

Property \ref{eq:monotone} enables monotone access; a user only learns a function of a
plaintext if and only if that user has permission to learn the value of that
plaintext. This implies that $(A, \gamma_A)$ cannot be a group unless
$\mathcal{F}$ is a class of constant predicates. In general,
\ref{eq:monotone} implies that $\mathcal{F}$ is monotonic. Monotone access is equivalent
to the preceding three properties collectively; that is
\begin{equation*}
\ref{eq:minimal} \land \ref{eq:idempotence} \land \ref{eq:nonmonotone}
\iff \ref{eq:monotone}
\end{equation*}

\subsubsection{Non-Monotone Access}\label{sec:nonmonotone}
Non-monotone access is trickier to define and to suitably
accommodate in a security definition. It can arise from policies that
involve negation. As an example, suppose that it is permissible for a
party to decrypt data sets designated as either ``geology'' or
``aviation'', but is not authorized to decrypt results with both
designations that arise from homomorphic computations on both data
sets. Of course it is then necessary to strengthen the restrictions on the adversary's
choice of $a_0$ and $a_1$ in the security game. Let $a_0$ and
$a_1$ be the attributes chosen by the adversary. Intuitively, the goal
is to show that any sequence of transitions that leads $a_0$ to a
an element outside the support of $f$, also leads $a_1$ to
an element
outside the support of $f$,
and vice versa. Instead of explicitly imposing this non-triviality
constraint on the adversary's choice of attributes, one may seek to
show that there is no pair of attributes distinguishable under any
$\gamma_A$ and $f \in \mathcal{F}$.  This is captured by the property
of non-monotone
indistinguishability (\ref{eq:nonmonotone}). Trivially, the constant
operations satisfy \ref{eq:nonmonotone}. Of more interest is an
operation that limits homomorphic operations to ciphertexts with the
same attribute. This captures our usual requirements for the
(anonymous) IBE functionality, but it is also satisfactory for many
applications of
general PE where computation need only be performed on ciphertexts with
matching
attributes. To accomplish this, the attribute
space is augmented with a (logical) absorbing element $z$
such that $f(z) = 0\; \forall f \in
\mathcal{F}$. The
attribute operation is defined as follows:
\begin{equation}
\delta(a_1, a_2) = \begin{cases} 
a_1 & \text{ if } a_1 = a_2\\
z & \text{ if } a_1 \neq a_2
\end{cases}
\end{equation}
$\delta$ models the inability to perform homomorphic
evaluations on ciphertexts associated with unequal
attributes (identities in the case of IBE). A scheme with this
operation can only be fully attribute-hiding in a vacuous sense
(it may be such that no restrictions are placed upon the adversary's
choice of $f$ but it is unable to find attributes $a_0$ and $a_1$
satisfying $f(a_0) = f(a_1) = 1$ for any $f$.) This is the case for
anonymous IBE where the predicates are equality relations, and for the
constant map $(a_1, a_2) \mapsto z$ that models the absence of
a homomorphic property, although this is preferably modeled by
appropriately constraining the class of permissible circuits. More generally, such schemes can only be
weakly attribute-hiding because their operations $\gamma_A$ only satisfy a relaxation of
\ref{eq:nonmonotone} given as follows:

\emph{Necessary condition for weakly attribute-hiding}
$\forall a_1, a_2, d \in A$:
\begin{eqnarray*}
f(a_1) = f(a_2) = 0 & \implies & f(\gamma_A(d, a_1)) = f(\gamma_A(d, a_2))\\
& \land &  f(\gamma_A(a_1, d)) = f(\gamma_A(a_2, d))
\end{eqnarray*}
\begin{equation}\label{eq:nonmonotone_weak}
\end{equation}

\begin{remark}
In the case of general schemes not satisfying \ref{eq:nonmonotone}, placing constraints on the adversary's choice of attributes weakens
the security definition. Furthermore, it must be possible for the challenger to efficiently check whether a
pair of attributes satisfies such a condition. Given the added
complications, it is tempting to move to a simulation-based definition
of security. However, this is precluded by the recent impossibility
results of \cite{AGVW12ePrint} in the case of both weakly and fully
attribute-hiding in the NA/AD-SIM models of
security. However, for predicate encryption with public index (the
attribute is not hidden), this has not been ruled out for 1-AD-SIM and
many-NA-SIM where ``1'' and ``many'' refer to the number of
ciphertexts seen by the adversary. See
\cite{AGVW12ePrint,BO12ePrint} for more details. In the context of non-monotone access,
it thus seems more reasonable to focus on predicate encryption with
public index. Our main focus in this work is on schemes that
facilitate attribute privacy, and therefore we restrict our attention
to schemes that at least satisfy \ref{eq:nonmonotone_weak}.
\end{remark}

\subsubsection{Delegate Predicate Encryption}
A primitive presented in \cite{Wei:2009} called ``Delegate Predicate
Encryption'' (DPE) \footnote{Not to be confused with the different notion of
  Delegatable Predicate Encryption.} enables a user to generate an encryption key associated
with a chosen attribute $a \in A$, which does not reveal anything
about $a$. The user can distribute this to certain parties who can then
encrypt messages with attribute $a$ obliviously. The realization in \cite{Wei:2009} is similar to the widely-used
technique of publishing encryptions of ``zero'' in a homomorphic
cryptosystem, which can then be treated as a key. In fact, this technique
is adopted in \cite{Rothblum:2011} to transform a strongly homomorphic
private-key scheme into a public-key one. Generalizing from the
results of \cite{Wei:2009}, this corollary follows from the property
of attribute-hiding

\begin{corollary}
An attribute-hiding HPE scheme is a DPE as defined in \cite{Wei:2009} if there exists a
$\gamma \in \Gamma$ such that $(A \times M, \gamma)$ is unital.
\end{corollary}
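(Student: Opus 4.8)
The plan is to instantiate the DPE algorithms of \cite{Wei:2009} directly from the HPE algorithms, using the unit of $(A \times M, \gamma)$ as a neutral attribute-message pair into which an oblivious encryptor can splice a fresh payload. Write $(e_A, e_M)$ for the identity of $\gamma$; reading $\gamma((e_A, e_M), (a, m)) = (a, m) = \gamma((a, m), (e_A, e_M))$ coordinatewise shows that $e_A$ is a two-sided identity for $\gamma_A$ and $e_M$ one for $\gamma_M$, hence $\gamma((a, e_M), (e_A, m)) = (\gamma_A(a, e_A), \gamma_M(e_M, m)) = (a, m)$ for all $a \in A$, $m \in M$ --- the key algebraic fact. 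The DPE scheme then inherits $\aname{Setup}$, $\aname{GenKey}$ and $\aname{Decrypt}$ verbatim; an encryption key for a chosen attribute $a$ is $\vname{EK}_a \gets \aname{HPE.Encrypt}(\fname{PP}, a, e_M)$ (computable from $\fname{PP}$ alone, revealing nothing by design); and to encrypt $m$ under $\vname{EK}_a$ one samples $c' \gets \aname{HPE.Encrypt}(\fname{PP}, e_A, m)$ under the fixed public attribute $e_A$ and outputs $\aname{HPE.Eval}(\fname{PP}, \gamma, \vname{EK}_a, c')$, viewing $\gamma$ as a single-gate circuit in $\mathbb{C}$.

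Correctness is then immediate from the HPE correctness conditions: by condition 2 the Eval output encrypts $\gamma((a, e_M), (e_A, m)) = (a, m)$, so $\aname{HPE.Decrypt}(\fname{SK}_f, \cdot)$ returns $m$ exactly when $f(a) = 1$, and condition 2(b) bounds its length by the fixed polynomial $L(\lambda)$. For the DPE privacy requirement --- that $\vname{EK}_a$ reveal nothing about $a$ beyond what the issued predicate keys already reveal --- I would reduce directly to the attribute-hiding game of Definition \ref{def:secgame}: a distinguisher for $\vname{EK}_{a_0}$ versus $\vname{EK}_{a_1}$ is precisely an adversary against $\aname{HPE.Encrypt}(\fname{PP}, a_b, e_M)$, and since both challenge ciphertexts carry the \emph{same} payload $e_M$, the message-equality side condition of the game is satisfied vacuously, so the advantage is negligible. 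Secrecy of messages subsequently encrypted under $\vname{EK}_a$ reduces in the same manner, now to payload-hiding, since the evaluated ciphertext is a function of $\vname{EK}_a$ together with a fresh encryption of $m$.

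The main obstacle is reconciling the precise syntax and security model of DPE in \cite{Wei:2009} with the HPE notions used here --- in particular whether \cite{Wei:2009} insists that ciphertexts produced through $\vname{EK}_a$ be \emph{indistinguishable} from honestly generated ciphertexts, which would demand that the evaluated-ciphertext distribution be statistically close to a fresh $\aname{HPE.Encrypt}$ output (a strong-homomorphism-type property, strictly more than attribute-hiding), or only that they decrypt correctly while $\vname{EK}_a$ hides $a$. The corollary should be read under the latter, weaker reading; pinning this down, and checking that the single-$\gamma$-gate circuit is admitted by $\mathbb{C}$ and that the reductions are tight, is the part needing the most care.
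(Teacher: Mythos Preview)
Your proposal is correct and matches the paper's approach: the paper does not give a formal proof of this corollary, only the informal remark preceding it that the construction ``is similar to the widely-used technique of publishing encryptions of `zero' in a homomorphic cryptosystem, which can then be treated as a key,'' and states that the corollary ``follows from the property of attribute-hiding.'' Your write-up is a faithful and more explicit unpacking of exactly this idea---taking $\vname{EK}_a$ to be an encryption of the message-identity $e_M$ under $a$, splicing in the payload via $\aname{Eval}$ on an encryption under the attribute-identity $e_A$, and reducing key privacy to the attribute-hiding game with equal payloads---and your caveat about which DPE security notion is intended is well taken, since the paper is silent on that point.
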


\section{Constructions with Attribute Aggregation}\label{sec:agg}
In this section, we give some meaningful examples of attribute
homomorphisms (all which satisfy monotone access)
for some known primitives. We begin with a
special case of PE introduced by Boneh and Waters \cite{Boneh:2007}, which they
call \emph{Hidden Vector Encryption}. In this primitive, a ciphertext embeds a vector $\vec{w}
\in \{0, 1\}^n$ where $n$ is fixed in the public parameters. On the other hand, a secret key corresponds to a
vector $\vec{v} \in V \defeq \{\ast, 0, 1\}^n$ where $\ast$ is
interpreted as a ``wildcard'' symbol or a ``don't care'' (it matches
any symbol). A decryptor who has a secret key for some $\vec{v}$ can
check whether it matches the attribute in a ciphertext.

To formulate in terms of PE, let $A = \{0, 1\}^n$ and define
\begin{equation*}
\mathcal{F} \subseteq \{(w_1, \hdots, w_n) \mapsto
\bigwedge_{i = 1}^{n} (v_i = w_i \lor v_i = \ast) :
\vec{v} \in V\}
\end{equation*}

Unfortunately, we cannot achieve a non-trivial homomorphic variant of HVE that satisfies \ref{eq:monotone}. To see this, consider the HVE class of predicates $\mathcal{F}$ and an operation $\gamma_A$ satisfying \ref{eq:monotone}. For any
$\vec{x}, \vec{y} \in A$, let $\vec{z} = \gamma_A(\vec{x},
\vec{y})$. Now for \ref{eq:monotone} to hold, we must have that
$f(\vec{z}) = f(\vec{x}) \land f(\vec{y})$ for all $f \in
\mathcal{F}$. Suppose $\vec{x}_i \neq \vec{y}_i$ and $\vec{z}_i = \vec{x}_i$. Then there exists an $f \in \mathcal{F}$ with $f(\vec{z}) = f(\vec{x})$ and $f(\vec{z}) \neq f(\vec{y})$. It is necessary to restrict $V$. Accordingly, let $V =
\{\ast, 1\}^n$
Setting the non-equal
elements to 0  yields associativity and
commutativity. Such an operation is equivalent to component-wise logical
AND on the attribute vectors, and we will denote it by
$\land^n$. $(A, \land^n)$ is a semilattice.

Recall that a predicate-only scheme does
not incorporate a payload into ciphertexts. Even such a scheme
$\mathcal{E}$ with
the $\land^n$ attribute homomorphism might find some purpose in real-world
scenarios. One particular application of
$\mathcal{E}$ is
secure data aggregation in Wireless Sensor Networks (WSNs), an area which has
been the target of considerable research (a good survey is \cite{Alzaid:2008}). It
is conceivable that some aggregator nodes may be authorized by the sink
(base station to which packets are forwarded) to read packets matching
certain criteria. An origin sensor node produces an outgoing
ciphertext as follows: (1). It encrypts the attributes describing its
data using $\mathcal{E}$. (2) It encrypts its sensor reading with the
public key of the sink using a \emph{separate} additively (say)
homomorphic public-key cryptosystem. (3) Both ciphertexts are forwarded
to the next hop.

Since an aggregator node receives packets from multiple
sources, it needs to have some knowledge about how to aggregate
them. To this end, the sink can authorize it to apply a particular predicate to incoming
ciphertexts to check for matching candidates for aggregation. One sample policy may be [``REGION1'' $\land$
``TEMPERATURE''']. It can then aggregate ciphertexts matching this policy. Additional
aggregation can be performed by a node further along the route that
has been perhaps
issued a secret key for a predicate corresponding to the more
permissive policy of [``TEMPERATURE'']. In the scenario above, it
would be more ideal if $\mathcal{E}$ were also additively homomorphic
since besides obviating the need to use another
PKE cryptosystem, more control is afforded to aggregators; they receive
the ability to decrypt partial sums, and therefore, to perform (more
involved) statistical computations on the data. 

It is possible to achieve the former case
  from some recent inner-product PE schemes that admit
homomorphisms on both attributes and payload. We focus on two
prominent constructions with different mathematical
structures. Firstly, a construction is examined
by Katz, Sahai and Waters (KSW) \cite{Katz:2008}, which relies on non-standard assumptions on
bilinear groups, assumptions that are justified by the authors in the
generic group model.  Secondly, we focus on a construction presented by
Agrawal, Freeman and Vaikuntanathan (AFV) \cite{Agrawal11} whose security is
based on the learning with errors (LWE) problem. 

In both schemes, an attribute is an element of $\ZZ_m^n$\footnote{In
  \cite{Katz:2008}, $m$ is a product of three large primes and $n$ is
  the security parameter. In \cite{Agrawal11}, $n$ is independent of
  the security parameter and $m$ may be polynomial or superpolynomial in the security
  parameter; in the latter case $m$ is the product
  of many ``small'' primes. We require that $m$ be superpolynomial here.} and a predicate also corresponds to an
element of $\ZZ_m^n$. For $\vec{v} \in \ZZ_m^n$, a predicate
$f_{\vec{v}} : \ZZ_m^n \to \{0, 1\}$ is defined by
\begin{equation*}
f_{\vec{v}}(\vec{w}) = \begin{cases} 1 & \text{ iff } \langle \vec{v},
  \vec{w} \rangle \\ 0 & \text{ otherwise }
\end{cases}
\end{equation*}

Roughly speaking, in a ciphertext, all sub-attributes (in $\ZZ_m$) are blinded by the
same uniformly random ``blinding'' element $b$ \footnote{a scalar in KSW and a
  matrix in AFV}. The decryption algorithm multiplies each
component by the corresponding component in the predicate vector, and
the blinding element $b$ is eliminated when the inner product
evaluates to zero
with all but negligible probability, which allows decryption to
proceed. 

Let $\vec{c_1}$ and $\vec{c_2}$ be ciphertexts that encrypt attributes $a_1$ and
$a_2$ respectively. It can be easily shown that the sum
$\vec{c^\prime} = \vec{c_1} \boxplus$\footnote{$\boxplus$ denotes a
  pairwise sum of the ciphertext components in both schemes}
$\vec{c_2}$ encrypts both $a_1$ and $a_2$ in a somewhat ``isolated''
way. The lossiness is ``hidden'' by the negligible probability of two
non-zero inner-products summing to 0.  For linear aggregation, this can be
repeated a polynomial number of times (or effectively unbounded in
practice) while ensuring correctness with overwhelming probability. While
linear aggregation is sufficient for the WSN scenario, it is
interesting to explore other circuit forms. For the KSW scheme, we
observe that all circuits of polynomial depth can be evaluated with
overwhelming probability. For
AFV, the picture is somewhat similar to the fully homomorphic
schemes based on LWE such as \cite{BV11a,Brakerski2011b} but without
requiring multiplicative gates. 

While there are motivating scenarios for aggregation on the
attributes, in many cases it is adequate or preferable to restrict
evaluation to ciphertexts with matching attributes; that is, by means of the $\delta$
operation defined in Section \ref{sec:nonmonotone}. Among these cases
is anonymous IBE. In the next section, we introduce an IBE
construction that supports an unbounded XOR homomorphism, prove
that it is strongly homomorphic and then investigate anonymous
variants.

\section{Main Construction: XOR-Homomorphic IBE}\label{sec:cocks}
In this section, an XOR-homomorphic IBE scheme is presented whose security is based
on the quadratic residuosity assumption. Therefore, it is similar in
many respects to the Goldwasser-Micali (GM) cryptosystem \cite{GM82}, which
is well-known to be XOR-homomorphic. Indeed, the GM scheme has found
many practical applications due to its homomorphic property. In
Section \ref{sec:app}, we show how many of these applications benefit
from an XOR-homomorphic scheme in the identity-based setting.

Our construction derives from the IBE scheme due to Cocks
\cite{COCKS01} which has a security reduction to the quadratic
residuosity problem. To the best of our knowledge, a homomorphic
variant has not been
explored to date.

\subsection{Background}
Let $m$ be an integer. A quadratic residue in the residue ring $\ZZ_m$ is an integer $x$ such
that $x \equiv y^2 \mod{m}$ for some $y \in \ZZ_m$. The set of
quadratic residues in $\ZZ_m$ is denoted $\QR(m)$. If $m$ is prime, it
easy to determine whether any $x \in \ZZ_m$ is a quadratic residue.

Let $N = pq$ be a composite modulus where $p$ and $q$ are prime. Let
$x \in \ZZ$. We
write $\jacobi{x}{N}$ to denote the Jacobi symbol of $x \mod{N}$. The
subset of integers with Jacobi symbol +1 (resp. -1) is denoted
$\ZZ_N[+1]$ (resp. $\ZZ_N[-1])$. The quadratic
residuosity problem is to determine, given input $(N, x \in \ZZ_N[+1])$,
whether $x \in \QR(N)$, and it is believed to be intractable.

Define the encoding $\nu : \{0, 1\} \to \{-1, 1\}$ with $\nu(0) = 1$
and $\nu(1) = -1$. Formally, $\nu$ is a group isomorphism between
$(\ZZ_2, +)$ and $(\{-1, 1\}, \ast)$.

In this section, we build on the results of \cite{AG09} and
therefore attempt to maintain consistency with their notation where
possible. As in \cite{AG09}, we let $H : \{0, 1\}^\ast \to \ZZ^\ast_N[+1]$
be a full-domain hash. A message bit is mapped to an element of $\{-1,
1\}$ via $\nu$ as defined earlier (0 (1 resp.) is encoded as 1 (-1 resp.)).

\subsection{Original Cocks IBE Scheme}
\begin{itemize}
\item
$\mathbf{\aname{CocksIBE}.Setup}(1^\lambda)$: 
\begin{enumerate}
\item
Repeat:
$p, q \samplefrom \aname{RandPrime}(1^\lambda)$
Until: $p \equiv q \equiv 3 \pmod{4}$
\item
$N \gets pq$
\item
Output $(\vname{PP} := N, \vname{MSK} := (p, q))$
\end{enumerate}
\item
$\mathbf{\aname{CocksIBE}.KeyGen}(\vname{PP}, \vname{MSK}, \vname{id})$: 
\begin{enumerate}
\item
Parse $\vname{MSK}$ as $(p, q)$.
\item
$a \gets H(\vname{id})$
\item
$r \gets a^{\frac{N + 5 - p - q}{8}} \pmod{N}$\\
($\therefore r^2 \equiv a \pmod{N}$ or $r^2 \equiv -a \pmod{N}$)
\item
Output $\vname{sk}_{\vname{id}} := (\vname{id}, r)$
\end{enumerate}
\item
$\mathbf{\aname{CocksIBE}.Encrypt}(\vname{PP}, \vname{id}, b)$: 
\begin{enumerate}
\item
$a \gets H(\vname{id})$
\item
$t_1, t_2 \samplefrom \ZZ^\ast_N[\nu(b)]$
\item
Output $\vec{\psi} := (t_1 + at_1^{-1}, t_2 - at_2^{-1})$
\end{enumerate}
\item
$\mathbf{\aname{CocksIBE}.Decrypt}(\vname{PP}, \vname{sk}_{\vname{id}}, \vec{\psi})$: 
\begin{enumerate}
\item
Parse $\vec{\psi}$ as $(\vec{\psi}_1, \vec{\psi}_2)$
\item
Parse $\vname{sk}_{\vname{id}}$ as $(\vname{id}, r)$
\item
$a \gets H(\vname{id})$
\item
If $r^2 \equiv a \pmod{N}$, set $d \gets \vec{\psi}_1$. Else if $r^2
\equiv -a \pmod{N}$, set $d \gets \vec{\psi}_2$. Else output $\bot$ and abort.
\item
Output $\nu^{-1}(\jacobi{d + 2r}{N})$
\end{enumerate}
\end{itemize}

The above scheme can be shown to be adaptively secure in the random oracle
model assuming the hardness of the quadratic residuosity problem.

\subsubsection{Anonymity}\label{sec:cocks_orig_anon}
Cocks' scheme is not anonymous. Boneh et al. \cite{Boneh:2004} report
a test due to Galbraith that enables an attacker to distinguish the
identity of a ciphertext. This is achieved with overwhelming
probability given multiple ciphertexts. It is shown by Ateniese and
Gasti \cite{AG09} that there is no ``better'' test for attacking
anonymity. Briefly, let $a = H(\vname{id})$ be the public key derived from the identity
$\vname{ID_a}$. Let $c$ be a
ciphertext in the Cocks' scheme. Galbraith's test is defined as
\begin{equation*}
\fname{GT}(a, c, N) = \jacobi{c^2 - 4a}{N}
\end{equation*}
Now if $c$ is a ciphertext encrypted with $a$, then $\fname{GT}(a, c,
N) = +1$ with all but negligible probability. For $b \in
\ZZ_N^\ast$ such that $b \neq a$, the value $\fname{GT}(b, c, N)$ is
statistically close to the
uniform distribution on $\{-1, 1\}$. Therefore, given multiple
ciphertexts, it can be determined with overwhelming probability
whether they correspond to a particular identity. 

\subsection{XOR-homomorphic Construction}
Recall that a ciphertext in the Cocks scheme consists of two elements in $\ZZ_N$. Thus, we have \[(c, d) \gets \aname{CocksIBE.Encrypt}(\vname{PP}, \vname{id}, b) \in \ZZ_N^2\] for some identity $\vname{id}$ and bit $b \in \{0, 1\}$. Also recall that only one element is actually used for decryption depending on whether $a := H(\vname{id}) \in \QR(N)$ or $-a \in \QR(N)$. If the former holds, it follows that a decryptor has a secret key $r$ satisfying $r^2 \equiv a \pmod{N}$. Otherwise, a secret key $r$ satisfies $r^2 \equiv -a \pmod{N}$. To simplify the description of the homomorphic property, we will assume that $a \in \QR(N)$ and therefore omit the second ``component'' $d$ from the ciphertext. In fact, the properties hold analogously for the second ``component'' by simply replacing $a$ with $-a$.

In the homomorphic scheme, each ``component'' of the ciphertext is represented by a pair of elements in $\ZZ_N^2$ instead of a single element as in the original Cocks scheme. As mentioned, we will omit the second such pair for the moment.
Consider
the following encryption algorithm $E_a$ defined by
\vspace{5pt}

$\aname{E_a}(b : \{0, 1\}):$
\begin{algorithmic}
\State \indent$t
\samplefrom \ZZ^\ast_N[\nu(b)]$
\State \indent \Return $(t + at^{-1}, 2) \in \ZZ^2_N$.
\end{algorithmic}
Furthermore, define the decryption function $D_a(\vec{c}) = \nu^{-1}(c_0 + rc_1)$. The homomorphic operation
$\boxplus : \ZZ^2_N \times \ZZ^2_N \to \ZZ^2_N$ is defined as follows:
\begin{equation}
\vec{c} \boxplus \vec{d} = (c_0d_0 + ac_1d_1, c_0d_1 + c_1d_0)
\end{equation}
It is easy to see that $D_a(\vec{c} \boxplus
\vec{d}) = D_a(\vec{c}) \oplus D_a(\vec{d})$:
\begin{eqnarray}\label{eq:xh_derivation}
D_a(\vec{c} \boxplus
\vec{d}) &=& D_a((c_0d_0 + ac_1d_1, c_0d_1 + c_1d_0)) \nonumber \\
&=&  \nu^{-1}((c_0d_0 + ac_1d_1) + r(c_0d_1 + c_1d_0)) \nonumber \\
&=&\nu^{-1}(c_0d_0 + rc_0d_1 + rc_1d_0 +
r^2c_1d_1) \nonumber \\
&=& \nu^{-1}((c_0 + rc_1)(d_0 + rd_1)) \nonumber \\
&=& \nu^{-1}(c_0 + rc_1) \oplus \nu^{-1}(d_0 + rd_1) \nonumber \\
&=& D_a(\vec{c}) \oplus D_a(\vec{d})
\end{eqnarray}

Let $R_a = \ZZ_N[x] / (x^2 - a)$ be a quotient of the polynomial ring
$R = \ZZ_N[x]$.  It is more natural and convenient to
view ciphertexts as elements of $R_a$ and the homomorphic operation as
multiplication in $R_a$. Furthermore, decryption equates to
evaluation at the point $r$. Thus the homomorphic evaluation of two
ciphertext polynomials $c(x)$ and $d(x)$ is simply $e(x) = c(x) \ast d(x)$
where $\ast$ denotes multiplication in $R_a$. Decryption becomes
$\nu^{-1}(e(r))$. Moreover, Galbraith's test is generalized straightforwardly to the ring $R_a$:
\begin{equation*}
\fname{GT}(a, c(x)) = \jacobi{c_0^2 - c_1^2a}{N}.
\end{equation*}
We now formally describe our variant of the Cocks
scheme that supports an XOR homomorphism.

\begin{remark}
We have presented the scheme in accordance with Definition
\ref{def:hpe} for consistency with the rest of the paper. Therefore,
it uses the circuit formulation, which we would typically consider
superfluous for a group homomorphic scheme.
\end{remark}

Let $\mathbb{C} \defeq \{\vec{x}
\mapsto \langle \vec{t}, \vec{x} \rangle : \vec{t} \in \ZZ^{\ell}_2\}  \subset \ZZ^{\ell}_2 \to \ZZ_2$ be the class of
arithmetic circuits characterized by linear functions over $\ZZ_2$ in
$\ell$ variables. As such, we associate a representative vector $V(C)
\in \ZZ^{\ell}_2$ to every circuit $C \in \mathbb{C}$. 
In order to obtain a strongly homomorphic scheme, we use the standard technique of re-randomizing the evaluated
ciphertext by homomorphically adding an encryption of zero.
\begin{itemize}
\item
$\mathbf{\aname{xhIBE}.Encrypt}(\vname{PP}, \vname{id}, b)$: 
\begin{enumerate}
\item
$a \gets H(\vname{id})$
\item
As a subroutine (used later), define

$E(\vname{PP}, a, b)$:
\begin{enumerate}
\item
$t_1, t_2 \samplefrom \ZZ^\ast_N[\nu(b)]$
\item
$g_1, g_2 \samplefrom \ZZ^\ast_N$
\item
$c(x) \gets (t_1 + ag_1^2t_1^{-1}) + 2g_1x \in \ZZ_N[x]$
\item
$d(x) \gets (t_2 + ag_2^2t_2^{-1}) + 2g_2x \in \ZZ_N[x]$
\item
Repeat steps (a) - (d) until $\fname{GT}(a, c(x)) = 1$ and $\fname{GT}(-a,
d(x)) = 1$.
\item
Output $(c(x), d(x))$
\end{enumerate}
\item
Output $\vec{\psi} := (E(\vname{PP}, a, b), a)$
\end{enumerate}
\item
$\mathbf{\aname{xhIBE}.Decrypt}(\vname{PP}, \vname{sk}_{\vname{id}}, \vec{\psi})$: 
\begin{enumerate}
\item
Parse $\vec{\psi}$ as $(c(x), d(x), a)$
\item
Parse $\vname{sk}_{\vname{id}}$ as $(\vname{id}, r)$
\item
If $r^2 \equiv a \pmod{N}$ and $\fname{GT}(a, c(x)) = 1$, set $e(x) \gets c(x)$. Else if $r^2
\equiv -a \pmod{N}$ and $\fname{GT}(-a, c(x)) = 1$, set $e(x) \gets d(x)$. Else output $\bot$
and abort.
\item
Output $\nu^{-1}(\jacobi{e(r)}{N})$
\end{enumerate}
$\mathbf{\aname{xhIBE}.Eval}(\vname{PP}, C, \vec{\psi_1}, \hdots, \vec{\psi}_{\ell})$: 
\begin{enumerate}
\item
Parse $\vec{\psi_i}$ as $(c_i(x), d_i(x), a_i)$ for $1 \leq i \leq
\ell$
\item
If $a_i \neq a_j$ for $1 \leq i, j \leq \ell$, abort with $\bot$.
\item
Let $a = a_1$ and let $R_a = \ZZ_N[x] / (x^2 - a)$
\item
$v \gets V(C)$
\item
$J \gets \{1 \leq i \leq \ell : v_i = 1\}$
\item
$(c^\prime(x), d^\prime(x)) \gets (\prod_{i \in J}
  c_i(x) \mod{(x^2 - a)}, \prod_{i \in I}
  d_i(x)) \mod{(x^2 + a)}$

\item
$(c_z(x), d_z(x)) \gets E(\vname{PP}, a, 0)$ ($E$ is defined as a
subroutine in the specification of $\aname{xhIBE.Encrypt}$)
\item
Output $(c^\prime(x) \ast c_z(x) \mod{(x^2 - a)}, d^\prime(x) \ast
d_z(x) \mod{(x^2 + a)}, a)$.
\end{enumerate}
\end{itemize}

We now prove that our scheme is group homomorphic and strongly
homomorphic. A formalization of group homomorphic public-key schemes is given in
\cite{Armknecht:2010}. Our adapted definition for the PE setting
raises some subtle points. The third requirement in
\cite{Armknecht:2010} is more difficult to formalize for general PE;
we omit it from the definition here and leave a complete formalization to
Appendix \ref{ap:ghe}. We remark that this property which relates to
distinguishing ``illegitimate ciphertexts'' during decryption is not
necessary to achieve IND-ID-CPA security.
\begin{definition}[Adapted from Definition 1 in \cite{Armknecht:2010}]
\label{def:ghom}
Let $\mathcal{E} = (G, K, E, D)$ be a PE scheme with message space
$M$, attribute space $A$, ciphertext space $\hat{\mathcal{C}}$ and
class of predicates $\mathcal{F}$. The
scheme $\mathcal{E}$ is group homomorphic with respect to a non-empty
set of attributes $A^\prime \subseteq A$ if for
every $(\vname{PP}, \vname{MSK}) \gets G(1^\lambda)$, every $f \in
\mathcal{F} : A^\prime \subseteq \fname{supp}(f)$, and every $\vname{sk}_f \gets
K(\vname{MSK}, f)$, the
message space $(M, \cdot)$ is a non-trivial group, and there is a binary operation
$\boxdot : \hat{\mathcal{C}}^2 \to
\hat{\mathcal{C}}$ such that the following properties are satisfied
for the restricted ciphertext space
$\hat{\mathcal{C}_f} = \{c \in
\hat{\mathcal{C}} : D_{\vname{sk}_f}(c) \neq \bot\}$:

\begin{enumerate}
\item
The set of all encryptions $\mathcal{C} := \{c \in
\hat{\mathcal{C}_f} \mid c \gets E(\vname{PP}, a, m), a \in
A^\prime, m \in M\}$ under attributes in $A^\prime$ is a non-trivial
group under the operation $\boxdot$.
\item
The restricted decryption $D_{\vname{sk}_f}^\ast := D_{\vname{sk}_f |
  \mathcal{C}}$ is surjective and $\forall c, c^\prime \in
\mathcal{C} \quad D_{\vname{sk}_f}(c \boxdot c^\prime) =
D_{\vname{sk}_f}(c) \cdot D_{\vname{sk}_f}(c^\prime)$.
\item
\textbf{IBE only (generalized in Appendix \ref{ap:ghe})}
If $\mathcal{E}$ is an IBE scheme, then $\hat{\mathcal{C}_f}$ is also
required to be a 
group, and it is required to be computationally indistinguishable from $\mathcal{C}$; that is:
\begin{equation*}
\{(\vname{PP}, f, \vname{sk}_f, S, c) \mid c
\samplefrom \mathcal{C}, S \subset \{\vname{sk}_g \gets K(g) : g \in \mathcal{F}\}\} \cind \{(\vname{PP}, f, \vname{sk}_f, S, \hat{c}) \mid \hat{c}
\samplefrom \hat{\mathcal{C}_f}, S \subset \{\vname{sk}_g \gets K(g) : g \in \mathcal{F}\}\}.
\end{equation*}
\end{enumerate}
\end{definition}
Informally, the above definition is telling us that for a given subset
of attributes $A^\prime$ satisfying a predicate $f$, the set of
honestly generated encryptions under
these attributes forms a group that is epimorphic to the plaintext
group.  It does not say anything about ciphertexts that are not
honestly generated except in the case of IBE, where we require that all
ciphertexts that do not decrypt to $\bot$ under a secret key are indistinguishable. 

For the remainder of this section, we show that $\aname{xhIBE}$
fulfills the definition of a group homomorphic scheme, and that it is
IND-ID-CPA secure under the quadratic residuosity assumption in the
random oracle model. To simplify the
presentation of the proofs, additional notation is needed. In particular, we inherit the
notation from \cite{AG09}, and generalize it to the ring
$R_a$.

Define the subset $G_a \subset R_a$ as follows:
\begin{equation*}
G_a = \{c(x) \in R_a : \fname{GT}(a, c(x)) = 1\}
\end{equation*}
Define the subset $S_a \subset G_a$\footnote{This definition is stricter than its analog
  in \cite{AG09} in that all elements are in $G_a$. This definition here corrects an error in \cite{CHT2013af} where $h \in \ZZ^\ast_N$ instead of $h \in \ZZ_N$.}:
\begin{equation*}
S_a = \{2hx + (t + ah^2t^{-1}) \in G_a \mid h \in \ZZ_N, t, (t + ah^2t^{-1})
\in \ZZ_N^\ast\}
\end{equation*}
We have the following simple lemma: 

\begin{lemma}\label{lemma:alg}\hspace{1pt}\newline
\begin{enumerate}
\item
$(G_a, \ast)$ is a multiplicative group in $R_a$.
\item
$(S_a, \ast)$ is a subgroup of $G_a$
\end{enumerate}
\end{lemma}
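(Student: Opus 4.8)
\noindent\emph{Proof strategy.}

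\emph{Part 1 (}$G_a$ \emph{is a group).} The organizing tool is the conjugation map $\overline{(\cdot)}\from R_a \to R_a$ fixing $\ZZ_N$ and sending $x \mapsto -x$; it is a ring automorphism of $R_a$, and the associated norm $N(c) \defeq c \ast \overline{c} = c_0^2 - a c_1^2 \in \ZZ_N$ is multiplicative, $N(c \ast d) = N(c)N(d)$. Since $\fname{GT}(a, c(x)) = \jacobi{N(c)}{N}$, the test $\fname{GT}(a,\cdot)$ is multiplicative on elements whose norm is coprime to $N$, and $G_a$ is precisely the set of $c$ with $\fname{GT}(a,c) = +1$. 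First I would observe $G_a \subseteq R_a^\ast$: if $\jacobi{N(c)}{N} = +1$ then $N(c) \in \ZZ_N^\ast$, so $c^{-1} = N(c)^{-1}\overline{c} \in R_a$. The group axioms are then immediate: $1 \in G_a$ (norm $1$); closure under $\ast$ (product of two Jacobi symbols equal to $+1$ is $+1$); closure under inversion ($N(c^{-1}) = N(c)^{-1}$, whose Jacobi symbol is again $+1$); associativity and commutativity are inherited from $R_a$.

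\emph{Part 2 (}$S_a$ \emph{is a subgroup), setup.} I use the running assumption of this section that $a \in \QR(N)$. The plan is to prove the clean description
\[
  S_a \;=\; \{\, c(x) \in G_a \;:\; N(c) \in \QR(N) \cap \ZZ_N^\ast \,\},
\]
i.e. $S_a$ consists of those elements of $G_a$ whose norm is an \emph{invertible} square mod $N$. Inclusion ``$\subseteq$'' is one line: for $c = 2hx + (t + ah^2t^{-1})$ one computes $N(c) = (t + ah^2t^{-1})^2 - 4ah^2 = (t - ah^2t^{-1})^2$, and since $c \in G_a$ makes $N(c)$ a unit, $t - ah^2t^{-1}$ is a unit, so $N(c)$ is a unit square. (In particular $1 \in S_a$, via $h = 0$, $t = 1$.)

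\emph{The reverse inclusion is the heart of the matter.} Let $c = c_0 + c_1 x \in G_a$ with $N(c) = w^2$, $w \in \ZZ_N^\ast$. The key step is $c_0 \in \ZZ_N^\ast$: if a prime $\ell \mid N$ divided $c_0$, then $N(c) \equiv -ac_1^2 \pmod{\ell}$, which is a nonzero residue class (as $N(c) \in \ZZ_N^\ast$), so $\ell \nmid c_1$, and then $\jacobi{N(c)}{\ell} = \jacobi{-1}{\ell}\jacobi{a}{\ell}\jacobi{c_1^2}{\ell} = (-1)(+1)(+1) = -1$ — using $\ell \equiv 3 \pmod 4$ and $a \in \QR(N)$ — contradicting that $N(c)$ is a square mod $\ell$; hence $c_0$ is a unit (the assumption $a \in \QR(N)$ is essential precisely here). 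Now set $h = c_1/2$ (recall $2 \in \ZZ_N^\ast$) and look at $X^2 - c_0 X + ah^2$, whose discriminant is $c_0^2 - 4ah^2 = c_0^2 - ac_1^2 = N(c)$; since $N(c)$ is a square mod $p$ and mod $q$, this polynomial has roots in each of $\ZZ_p,\ZZ_q$, and at each prime its two roots sum to the unit $c_0$, so at least one of them is nonzero there; a CRT combination of a nonzero root mod $p$ with a nonzero root mod $q$ is a root mod $N$ lying in $\ZZ_N^\ast$, call it $t$. Then $s \defeq t + hx$ satisfies $s \ast s = (t^2 + ah^2) + 2th\,x = t \ast c$, so $c = t^{-1}(s \ast s) = 2hx + (t + ah^2t^{-1})$ with $t \in \ZZ_N^\ast$ and $t + ah^2t^{-1} = c_0 \in \ZZ_N^\ast$; together with $c \in G_a$ this places $c$ in $S_a$, completing the description.

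\emph{Conclusion.} Given the description, the subgroup claim is formal: the norm restricts to a group homomorphism $N\from G_a \to \ZZ_N^\ast$, and $\QR(N)\cap\ZZ_N^\ast$ (the image of squaring on $\ZZ_N^\ast$) is a subgroup of $\ZZ_N^\ast$, so $S_a = N^{-1}(\QR(N)\cap\ZZ_N^\ast)$ is a subgroup of $G_a$; equivalently one checks directly that $S_a$ contains $1$, is $\ast$-closed (a product of residues is a residue) and inversion-closed ($N(c^{-1}) = N(c)^{-1}$ is still a residue). I expect the reverse inclusion to be the only genuine obstacle — specifically the ``$c_0$ is a unit'' argument, which is exactly where the quadratic residuosity of $a$ is needed, and secondarily the CRT step that produces a unit root of the quadratic; everything else is bookkeeping with the norm map.
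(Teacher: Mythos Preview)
Your proof is correct and follows the same approach as the paper: both arguments hinge on the multiplicativity of the norm $N(c)=c_0^2-ac_1^2$ (the paper computes this explicitly for the closure of $G_a$; you phrase it via the conjugation automorphism) and both characterize $S_a$ as the set of elements of $G_a$ whose norm is a square in $\ZZ_N$. The paper simply asserts this characterization (``the members of $S_a$ are exactly the elements $c(x)$ such that $c_0^2-c_1^2a$ is a square''), whereas you supply a genuine proof of the reverse inclusion --- the ``$c_0$ is a unit'' step using $p,q\equiv 3\pmod 4$ and $a\in\QR(N)$, followed by the CRT selection of a unit root of $X^2-c_0X+ah^2$ --- which is real content the paper omits.
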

\begin{proof}
We must show that $G_a$ is closed under $\ast$. Let $c(x),
d(x) \in G_a$, and let $e(x) = c(x) \ast d(x)$.
\begin{eqnarray*}\fname{GT}(a, e(x)) &=& \jacobi{e_0^2 - ae_1^2}{N} \\
&=& \jacobi{(c_0d_0 + ac_1d_1)^2 - a(c_0d_1 + c_1d_0)^2}{N} \\
&=& \jacobi{(c_0^2 - ac_1^2)(d_0^2 - ad_1^2)}{N}\\
&=& \jacobi{(c_0^2 - ac_1^2)}{N} \jacobi{(d_0^2 - ad_1^2)}{N}\\
&=& \fname{GT}(a, c(x)) \cdot \fname{GT}(a, d(x))\\
& = & 1
\end{eqnarray*}
Therefore, $e(x) \in G_a$.

It remains to show that every element of $G_a$ is a
unit. Let $z = c_0^2 -ac_1^2 \in \ZZ_N$. An inverse $d_1x + d_0$ of $c(x)$ can be computed by setting
$d_0 =
\frac{c_0}{z}$
and $d_1 = \frac{-c_1}{z}$ if it holds that $z$
is invertible in $\ZZ_N$. Indeed such a $d_1x + d_0$ is in $G_a$. Now if
$z$ is not invertible in $\ZZ_N$ then $p | z$
or $q | z$, which implies that $\jacobi{z}{p} = 0$ or $\jacobi{z}{q} =
0$. But $\fname{GT}(a, c(x)) = \jacobi{z}{N} = \jacobi{z}{p}\jacobi{z}{q} = 1$
since $c(x) \in G_a$. Therefore, $z$ is a unit in $\ZZ_N$, and $c(x)$
is a unit in $G_a$.

Finally, to prove (2), note that the members of $S_a$ are exactly the
elements $c(x)$ such that $c_0^2 - c_1^2a$ is a square, and it is easy
to see that this is preserved under $\ast$ in $R_a$.
\qed
\end{proof}

We will also need the following corollary
\begin{corollary}[Extension of Lemma 2.2 in \cite{AG09}]\label{cor:dist}
The distributions $\{(N, a, t + ah^2t^{-1}, 2h) : N \gets
\aname{Setup}(1^\lambda), a \samplefrom \ZZ^\ast_N[+1], t, h \samplefrom \ZZ^\ast_N)\}$ and $\{(N, a, z_0, z_1) : N \gets
\aname{Setup}(1^\lambda), a \samplefrom \ZZ^\ast_N[+1], z_0 + 
z_1x \samplefrom G_a \setminus S_a\}$ are indistinguishable assuming the hardness of the quadratic residuosity problem.
\end{corollary}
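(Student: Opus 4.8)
The plan is to adapt the argument behind Lemma~2.2 of \cite{AG09} from $\ZZ_N$ to the quadratic ring $R_a$, isolating exactly where quadratic residuosity is invoked. The first step is to observe that the left-hand distribution is statistical in nature. Writing $c_0 = t + ah^2t^{-1}$ and $c_1 = 2h$, one has the identity $c_0^2 - ac_1^2 = (t - ah^2t^{-1})^2$, and for each fixed $h$ the map $t \mapsto t + (ah^2)t^{-1}$ is, off a negligible fraction of inputs, two-to-one onto $\{c_0 \in \ZZ_N^\ast : c_0^2 - ac_1^2 \in \QR(N)^\ast\}$; since $h$ (hence $c_1 = 2h$) is statistically uniform over $\ZZ_N^\ast$, the left-hand distribution is \emph{statistically} close to the uniform distribution on $S_a$. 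It therefore suffices to prove that the uniform distributions on $S_a$ and on $G_a \setminus S_a$ are computationally indistinguishable under the quadratic residuosity assumption.

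This is where Lemma~\ref{lemma:alg} is used: $S_a$ is an index-two subgroup of $G_a$ — it collects the elements whose value $\Delta(c(x)) := c_0^2 - ac_1^2$ in Galbraith's test is a square rather than a Jacobi-$+1$ non-square — so $G_a \setminus S_a$ is its unique non-trivial coset, and $\Delta$ is multiplicative under $\ast$. The reduction from quadratic residuosity is a \emph{norm twist}. On input $(N, X)$ with $X \in \ZZ_N[+1]$, it samples $\eta_0 \samplefrom \ZZ_N$, rejecting until $\jacobi{\eta_0^2 - X}{N} = +1$ (efficiently testable), samples $\eta_1 \samplefrom \ZZ_N^\ast$, and \emph{defines} the parameter $a := (\eta_0^2 - X)\eta_1^{-2}$; then $a \in \ZZ_N^\ast[+1]$, and $\eta(x) := \eta_0 + \eta_1 x \in R_a = \ZZ_N[x]/(x^2 - a)$ has $\Delta(\eta) = \eta_0^2 - a\eta_1^2 = X$ by construction and is a unit lying in $G_a$. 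The reduction then draws an honest sample $c(x)$ for this $a$ via the parametrisation above (needing only $a$, not $p$ or $q$) and outputs $(N, a, z_0, z_1)$ with $z(x) := c(x) \ast \eta(x)$. Because $\Delta(z) = \Delta(c) \cdot X$ with $\Delta(c)$ a square, $z(x)$ falls in $S_a$ precisely when $X \in \QR(N)$ and in $G_a \setminus S_a$ otherwise; and since multiplication by the fixed unit $\eta(x)$ is a group translation of $G_a$, it carries the near-uniform law of $c$ on $S_a$ to the uniform law on $S_a$ (when $X \in \QR(N)$) or on the coset $G_a \setminus S_a$ (when not). A distinguisher for the two distributions of the corollary thus decides quadratic residuosity, up to the statistical slack already incurred.

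The hard part will be showing that the synthesised parameter $a = (\eta_0^2 - X)\eta_1^{-2}$ is statistically close to uniform on $\ZZ_N^\ast[+1]$, \emph{and} that what little deviation there is does not depend — beyond negligible terms — on whether $X \in \QR(N)$; otherwise a distinguisher could exploit the marginal of $a$ rather than genuinely solve the problem. This reduces to a standard character-sum estimate: conditioned on $\jacobi{\eta_0^2 - X}{N} = +1$, the events ``$\eta_0^2 - X \in \QR(N)$'' and ``$\eta_0^2 - X \in -\QR(N)$'' each occur with probability $\frac{1}{2} + O(1/\min(p,q))$, obtained by counting the $\ZZ_p$- and $\ZZ_q$-points of $u^2 - X = v^2$ through CRT and bounding the residual quadratic-character sums; since $\eta_1^{-2}$ is a uniformly random square, $a$ is then within statistical distance $\fname{negl}(\lambda)$ of uniform on $\ZZ_N^\ast[+1]$, with the implied constants independent of the quadratic-residue status of $X$. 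A secondary, mechanical task is to discharge the genericity conditions — that $X$, $\eta_1$, $\Delta(c)$ and the relevant resultants are units, so that $\eta(x)$ really lies in $G_a$ and $z(x)$ really lies in the claimed set — each of which fails with at most negligible probability. Putting these together, the reduction's advantage differs from the assumed distinguishing advantage by only a negligible additive amount.
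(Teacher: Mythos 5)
Your proposal is correct in substance but takes a genuinely different route from the paper. The paper's proof is a short black-box reduction to Lemma~2.2 of \cite{AG09}: given a sample $(N,a,c)$ from that lemma (either $c = t+at^{-1}$ or $c+2x$ drawn from $G_a\setminus S_a$ with second coordinate $2$), the simulator draws $h \samplefrom \ZZ_N^\ast$, sets $b := h^{-2}a$, and hands $(N,b,c,2h)$ to the distinguisher; since $c = t + bh^2t^{-1}$ in the first case and $c^2 - b(2h)^2 = c^2 - 4a$ in the second, the lemma's two distributions map onto the corollary's two distributions with $(b,h)$ correctly distributed, so all the computational hardness is inherited from the cited lemma. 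You instead re-derive that computational core directly from quadratic residuosity: you show the parametrised distribution is statistically close to uniform on $S_a$, use that $S_a$ is the index-two kernel of the norm map $c(x) \mapsto c_0^2 - ac_1^2$ on $G_a$, and give a ``norm twist'' reduction that synthesises $a := (\eta_0^2 - X)\eta_1^{-2}$ so that $\eta_0 + \eta_1 x$ has norm $X$, then translates an honest sample by this fixed unit. This buys self-containedness --- no appeal to \cite{AG09} beyond the QR assumption itself --- at the price of the auxiliary statistical work you flag (near-uniformity of the synthesised $a$, the genericity conditions), which the paper sidesteps entirely by rerandomising the given parameter rather than constructing one. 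Two minor points: over $\ZZ_N$ the map $t \mapsto t + (ah^2)t^{-1}$ is generically four-to-one (two-to-one modulo each prime factor), not two-to-one, though this does not affect the uniformity conclusion; and since the deviation of your synthesised $a$ from uniform on $\ZZ_N^\ast[+1]$ is $O(1/p + 1/q)$ in both cases, you only need it to be negligible, not literally independent of the residuosity status of $X$.
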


\begin{proof}
The corollary follows immediately from Lemma 2.2 in \cite{AG09}
Let $\mathcal{A}$ be an efficient adversary that distinguishes both
distributions. Lemma 2.2 in \cite{AG09} shows that the distributions $d_0 := (\{(N, a, t + at^{-1}) : N \gets
\aname{Setup}(1^\lambda), a \samplefrom \ZZ^\ast_N[+1], t\}$ and $d_1 := \{(N, a, z_0) : N \gets
\aname{Setup}(1^\lambda), a \samplefrom \ZZ^\ast_N[+1], z_1x + z_0
\samplefrom G_a \setminus S_a \mid z_2 = 2\}$ are
indistinguishable. Given a sample $(N, a, c)$, the simulator generates
$h \samplefrom \ZZ^\ast_N$ and computes $b := h^{-2}a$. It passes the
element $(N, b, c, 2h)$ to $\mathcal{A}$. The simulator aborts with
the output of $\mathcal{A}$. 
\qed
\end{proof}

\begin{theorem}
$\aname{xhIBE}$ is a group homomorphic scheme with respect to the
group operation of $(\ZZ_2, +)$.
\end{theorem}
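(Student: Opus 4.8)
The plan is to verify the three conditions of Definition~\ref{def:ghom} for the scheme $\aname{xhIBE}$, using the homomorphic operation $\boxdot$ realized by component-wise multiplication in the rings $R_a = \ZZ_N[x]/(x^2-a)$ and $R_{-a} = \ZZ_N[x]/(x^2+a)$ (i.e.\ the operation applied to the two polynomial components of a ciphertext, as in $\aname{xhIBE.Eval}$ without the re-randomization step). Fix $(\vname{PP},\vname{MSK}) \gets G(1^\lambda)$, write $N$ for $\vname{PP}$, and take $A' = \{\vname{id}\}$ a singleton identity with $a = H(\vname{id})$; the predicate $f$ is the equality predicate $f(\vname{id}') = [\vname{id}' = \vname{id}]$, so $A' \subseteq \fname{supp}(f)$ and $\vname{sk}_f = (\vname{id}, r)$ with $r^2 \equiv \pm a \pmod N$. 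I treat the component for which $r^2 \equiv a$; the other is symmetric with $a$ replaced by $-a$.

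First I would identify the relevant ciphertext sets. By construction, an honest encryption of $b$ produces a polynomial in $S_a$ (of the form $2hx + (t + ah^2t^{-1})$ with $h \in \ZZ_N$, $t,(t+ah^2t^{-1}) \in \ZZ_N^\ast$), whose sign bit under $D_a$ is $\nu^{-1}(\legendre{t+ah^2t^{-1}}{N})$; since $t \in \ZZ_N^\ast[\nu(b)]$ this recovers $b$ by the standard Cocks/Ateniese--Gasti argument. So the honest-encryption set $\mathcal{C}$ of Condition~1 is exactly $S_a$ (both sign classes are hit, since $t$ ranges over $\ZZ_N^\ast[+1] \cup \ZZ_N^\ast[-1]$), and Lemma~\ref{lemma:alg}(2) already gives that $(S_a, \ast)$ is a group --- that is Condition~1. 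For Condition~2, surjectivity of $D_{\vname{sk}_f}|_{\mathcal{C}}$ onto $\ZZ_2$ is immediate (encryptions of $0$ and of $1$ both exist), and the homomorphism $D_a(c \ast d) = D_a(c) \oplus D_a(d)$ is the computation~\eqref{eq:xh_derivation}, adapted to the $R_a$ formulation: evaluation at $r$ is a ring homomorphism $R_a \to \ZZ_N$, and then $c(r)d(r)$ has Jacobi symbol equal to the product of the two Jacobi symbols, which $\nu^{-1}$ turns into XOR. I should note that re-randomization (homomorphically adding $E(\vname{PP},a,0)$) does not disturb any of this, since an encryption of $0$ lies in the kernel $\{c \in S_a : D_a(c) = 0\}$, a subgroup.

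Condition~3 (the IBE-only condition) is where the real work lies, and I expect it to be the main obstacle. I must show $\hat{\mathcal{C}_f} = \{c \in \hat{\mathcal{C}} : D_{\vname{sk}_f}(c) \neq \bot\}$ is a group under $\boxdot$ and is computationally indistinguishable from $\mathcal{C} = S_a$ even given $\vname{sk}_f$ and a polynomial-size set $S$ of other secret keys. For the group structure: $\aname{xhIBE.Decrypt}$ returns $\bot$ exactly when the selected component fails the test $\fname{GT}(a,c(x)) = 1$ (and the identity mismatches), so $\hat{\mathcal{C}_f}$ corresponds to $G_a$ (the first component) paired with $G_{-a}$ (the second) --- and Lemma~\ref{lemma:alg}(1) says $(G_a,\ast)$ is a group, closing that point. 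For indistinguishability, the key observation is that $\mathcal{C} = S_a$ and $\hat{\mathcal{C}_f} \leftrightarrow G_a$ differ precisely on $G_a \setminus S_a$, and Corollary~\ref{cor:dist} tells us that a uniform element of $G_a \setminus S_a$ is indistinguishable (under QR) from an honestly-shaped element $t + ah^2t^{-1}$ with uniform Jacobi class; combined with the fact that a uniform element of $G_a$ is, with probability $1/2$, in $S_a$ and with probability $1/2$ in $G_a \setminus S_a$, a standard hybrid argument bridges the uniform distribution on $G_a$ to the uniform distribution on $S_a$. The one subtlety to handle carefully is the auxiliary input: the other secret keys in $S$ are square roots of $H(\vname{id}')$ for $\vname{id}' \neq \vname{id}$, which in the random-oracle model are independent of $a$ and simulatable, so they give the distinguisher no extra power; and the challenge secret key $r$ itself is already part of the tuple in Corollary~\ref{cor:dist}'s proof template (the reduction can program $H(\vname{id}) = a$ and would need $r = \sqrt{a}$ --- here I would instead note that since we are proving indistinguishability of ciphertext distributions and not breaking QR directly, we may take $r$ as given, or alternatively observe that revealing $r$ does not help distinguish $G_a$ from $S_a$ because $D_a$ treats both identically up to the sign bit, which is uniform on each side). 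Assembling these pieces yields all three conditions, hence the theorem.
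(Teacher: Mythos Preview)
Your proposal is correct and follows essentially the same route as the paper: identify $\mathcal{C}$ with $S_a$ and $\hat{\mathcal{C}}_f$ with $G_a$, invoke Lemma~\ref{lemma:alg} for the group structure of both, equation~\eqref{eq:xh_derivation} for the surjective homomorphism onto $(\ZZ_2,+)$, and Corollary~\ref{cor:dist} for the computational indistinguishability required by Condition~3. If anything you are more careful than the paper about the auxiliary inputs (the other secret keys in $S$ and the square root $r$) appearing in the distinguisher's view, which the paper's proof dispatches with a single appeal to Corollary~\ref{cor:dist} ``without the master secret key.''
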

\begin{proof}
Let $a = H(\vname{id})$ for any valid identity string
$\vname{id}$. Assume that the secret key $r$
satisfies $r^2 \equiv a \mod{N}$.  The analysis holds analogously if
$r^2 \equiv -a \mod{N}$; therefore, we omit the second component of
the ciphertexts for simplicity.

By definition, $S_a = \{c(x) \in R_a \mid \vec{\psi} := (c(x), d(x), a) \gets
\aname{xhIBE.Encrypt}(\vname{PP}, \vname{id}, m), m \in M\}$. By corollary \ref{cor:dist}, it holds that $S_a \cind G_a$ without the master secret key. The decryption algorithm only outputs $\bot$ on input $\vec{\psi} := (c(x), d(x), a)$ if $c(x) \notin G_a$ or $d(x) \notin G_{-a}$. Thus, omitting the second component, we have that $S_a$ corresponds to $\mathcal{C}$ and $G_a$ corresponds to $\hat{\mathcal{C}}_f$ in Definition \ref{def:ghom} (in this case $f$ is defined as $f(\vname{id}^\prime) = 1$ iff $\vname{id^\prime} = \vname{id}$).  It follows that the third requirement of Definition \ref{def:ghom} is satisfied. 

By Lemma \ref{lemma:alg}, $G_a$ is a group and $S_a$ is a non-trivial
subgroup of $G_a$. The surjective homomorphism between $\mathcal{C} :=
S_a$ and
$M := \ZZ_2^\ast$ has already been shown in the correctness derivation
in equation \ref{eq:xh_derivation}. This completes the proof.
\qed
\end{proof}
\begin{remark}
It is straightforward to show that $\aname{xhIBE}$ also meets the criteria
for a shift-type homomorphism as defined in \cite{Armknecht:2010}.
\end{remark}

\begin{corollary}\label{cor:shom}
$\aname{xhIBE}$ is \emph{strongly homomorphic}.
\end{corollary}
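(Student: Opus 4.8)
The plan is to show that for every circuit $C \in \mathbb{C}$ and every choice of input ciphertexts, the output distribution of $\aname{xhIBE.Eval}$ is statistically close to a fresh encryption of $C(m_1, \hdots, m_\ell)$ under the common attribute $a$. Since $\mathbb{C}$ consists only of $\ZZ_2$-linear functions $\vec{x} \mapsto \langle \vec{t}, \vec{x} \rangle$, the evaluated plaintext is $m' = \bigoplus_{i \in J} m_i$ where $J = \{i : v_i = 1\}$, and the homomorphic correctness already established in equation \ref{eq:xh_derivation} shows that $\prod_{i \in J} c_i(x)$ (mod $x^2 - a$) decrypts to $m'$. So the only thing left to verify is that the \emph{distribution} of the re-randomized output matches that of $\aname{xhIBE.Encrypt}(\vname{PP}, \vname{id}, m')$.

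First I would work inside the ring $R_a$, recalling from the proof of the previous theorem that the set of honestly produced ciphertext components is exactly $S_a$, and that by Lemma \ref{lemma:alg}(2), $(S_a, \ast)$ is a group. Hence the product $c'(x) = \prod_{i \in J} c_i(x)$ lies in $S_a$, and multiplying by a fresh encryption of zero $c_z(x) \gets E(\vname{PP}, a, 0)$ — whose $c$-component is a uniformly distributed element of the subgroup $S_a^{(0)} \subseteq S_a$ consisting of encryptions of $0$ — yields a uniformly random element of the coset $c'(x) \ast S_a^{(0)}$. The key observation is that $S_a^{(0)}$ is precisely the set of ciphertext components decrypting to $0$, so this coset is exactly the set of all honest encryptions of $m'$; thus the distribution of $c'(x) \ast c_z(x)$ is identical (not merely statistically close) to that of a fresh encryption of $m'$, provided the encryption-of-zero component is truly uniform on $S_a^{(0)}$. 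The same argument applies verbatim to the second component in the ring $R_{-a} = \ZZ_N[x]/(x^2+a)$, and the output attribute $a$ is deterministic, so the two components and the attribute tag jointly match a fresh ciphertext.

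The one subtlety — and the main obstacle — is justifying that the $c$-component of $E(\vname{PP}, a, 0)$ is uniform over the subgroup $S_a^{(0)}$, rather than over some smaller or skewed subset. This requires tracing through the rejection-sampling loop in $E$: the pair $(t + a g^2 t^{-1}, 2g)$ with $t \samplefrom \ZZ_N^\ast[\nu(0)]$, $g \samplefrom \ZZ_N^\ast$, conditioned on $\fname{GT}(a, c(x)) = 1$, must be shown to range uniformly over exactly the elements of $S_a^{(0)}$, which follows from the parametrisation of $S_a$ given in its definition together with the fact (used in Corollary \ref{cor:dist}) that the map $(t,g) \mapsto$ ciphertext is regular on its image. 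Strictly the distribution is uniform up to the negligible statistical defect coming from $t$ or $t + a h^2 t^{-1}$ failing to be a unit in $\ZZ_N$ (probability $O(1/p + 1/q)$), which is why the indistinguishability is statistical rather than perfect; I would remark that this accounts for the ``$\ind$'' in the statement. Once that is pinned down, the corollary follows, since the group structure guarantees the coset product distributes the fresh zero-encryption's randomness uniformly over all encryptions of the target plaintext.
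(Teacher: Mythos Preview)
Your proposal is correct and follows essentially the same approach as the paper: use the group structure of $S_a$ from Lemma \ref{lemma:alg} to argue that multiplying the evaluated ciphertext by a fresh encryption of $0$ (uniform in the subgroup $S_a^{(0)}$) produces a uniform element of the coset $S_a^{(m')}$, which is exactly the distribution of a fresh encryption of $m'$. Your treatment is in fact more careful than the paper's own proof, which simply asserts $r(x) \samplefrom S_a^{(0)}$ and invokes the coset argument without addressing the uniformity of $E(\vname{PP}, a, 0)$ over $S_a^{(0)}$ or the negligible statistical defect that you correctly flag.
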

\begin{proof}
Any group homomorphic
scheme can be turned into a strongly homomorphic scheme by
rerandomizing an evaluated ciphertext. Indeed
this follows from Lemma 1 in \cite{Armknecht:2010}. Rerandomization is
achieved by multiplying the evaluated ciphertext by an
encryption of the identity, as in $\aname{xhIBE.Eval}$. Details
follow for completeness. 

Let $\vname{id}$ be an identity and let $a = H(\vname{id})$. For any circuit $C \in \mathbb{C}$, any messages $b_1, \hdots, b_\ell$ and ciphertexts $\vec{\psi_1},
\hdots, \vec{\psi_\ell} \gets \aname{xhIBE.Encrypt}(\fname{PP},
b_1, \vname{id}), \hdots, \linebreak[1] \aname{xhIBE.Encrypt}(\fname{PP},
b_\ell, \vname{id})$, we have
\begin{equation*}(c^\prime(x), d^\prime(x), a) \gets
\mathbf{\aname{xhIBE}.Eval}(\vname{PP}, C, \vec{\psi_1}, \hdots,
\vec{\psi_\ell}).\end{equation*}
From the last step of $\aname{xhIBE.Eval}$,
we see that
$c^\prime(x) \gets c^{\prime\prime}(x) \ast r(x)$ where $r(x)
\samplefrom {S_a}^{(0)}$ and $c^{\prime\prime}(x)$ is the result of the
  homomorphic evaluation. Suppose that $c^{\prime\prime}(x)$ encrypts
  a bit $b$. Since $S_a$ is a group, it follows that $c^\prime(x)$ is
  uniformly distributed in the coset $S_a^{(b)}$ (of the subgroup $S_a^{(0)}$) and is
thus distributed according to a ``fresh'' encryption of $b$.
\qed
\end{proof}

\begin{theorem}
$\aname{xhIBE}$ is IND-ID-CPA secure in the random oracle model under
  the quadratic residuosity assumption.
\end{theorem}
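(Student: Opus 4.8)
The plan is to reduce IND-ID-CPA security of $\aname{xhIBE}$ to the quadratic residuosity problem, following closely the structure of the original Cocks security proof (and the treatment in \cite{AG09}), but adapted to the polynomial-ring ciphertext format. First I would set up the reduction: given a QR challenge $(N, x)$ with $x \in \ZZ_N^\ast[+1]$, the simulator $\mathcal{B}$ runs $\mathcal{A}$, publishing $\vname{PP} := N$ and answering random-oracle queries $H(\vname{id})$ lazily. For all identities except a randomly guessed target identity $\vname{id}^\ast$, the simulator picks $r_{\vname{id}} \samplefrom \ZZ_N^\ast$ and programs $H(\vname{id}) := r_{\vname{id}}^2 \bmod N$ (or a random sign-flipped variant), so that it knows a valid secret key and can answer any $\aname{KeyGen}$ query on $\vname{id} \neq \vname{id}^\ast$. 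For $\vname{id}^\ast$ it programs $H(\vname{id}^\ast) := x$ (possibly times a known square, to randomize). If $\mathcal{A}$ ever queries the key for $\vname{id}^\ast$, the simulator aborts; this costs only a factor of $1/q_H$ in the advantage.

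Next I would handle the challenge ciphertext. When $\mathcal{A}$ submits $(\vname{id}^\ast, b_0, b_1)$, the simulator picks a random bit $\beta$ and must produce a ciphertext $(c(x), d(x), a)$ with $a = x$ that is distributed as an honest encryption of $b_\beta$ \emph{when $x \in \QR(N)$} but carries no information about $\beta$ \emph{when $x \notin \QR(N)$}. This is exactly the role played by Corollary~\ref{cor:dist}: by that corollary, the honest-encryption distribution $S_a$ (with the $(t + ah^2 t^{-1}, 2h)$ shape) is indistinguishable — under the QR assumption — from the uniform distribution on $G_a \setminus S_a$; and crucially, the partition of $G_a$ into the coset $S_a^{(0)}$ (encryptions of $0$) and $S_a^{(1)}$ (encryptions of $1$) collapses exactly when $a \notin \QR(N)$, because then $-a$ governs the "used" component and the roles of the two components swap. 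Concretely, the simulator samples $t \samplefrom \ZZ_N^\ast[\nu(b_\beta)]$, $h \samplefrom \ZZ_N^\ast$, and sets $c(x) := (t + x h^2 t^{-1}) + 2hx$, and analogously for $d(x)$ with $-x$; when $x \in \QR(N)$ this is a genuine encryption of $b_\beta$ under $\vname{id}^\ast$, and when $x \notin \QR(N)$ a standard argument shows the Jacobi-symbol quantity $\jacobi{e(r)}{N}$ that decryption would extract is statistically balanced over $\{-1,1\}$ independently of $b_\beta$, so $\mathcal{A}$'s view is independent of $\beta$. Phase-2 key queries are answered as in Phase 1.

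Finally I would assemble the advantage bookkeeping: if $\mathcal{A}$ has non-negligible IND-ID-CPA advantage, then conditioned on the (no"abort") event of probability $\approx 1/q_H$, the simulator distinguishes "$x \in \QR(N)$" from "$x \notin \QR(N)$" with related non-negligible advantage, contradicting the QR assumption; the negligible slack comes from the random-oracle programming being perfect and from the statistical-balance claim above. The main obstacle I anticipate is the challenge-ciphertext argument: one must verify carefully that in the $x \notin \QR(N)$ case the \emph{pair} $(c(x), d(x))$ — not just the single component used in the original Cocks analysis, but now with the extra $2hx$ term and the $\fname{GT}$-rejection resampling built into the subroutine $E$ — still leaks nothing about $\beta$, and that the rejection sampling in $E$ (conditioning on $\fname{GT}(a,c(x)) = 1$ and $\fname{GT}(-a,d(x)) = 1$) does not skew the distribution in a $\beta$-dependent way. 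This is where Lemma~\ref{lemma:alg} (the coset structure of $S_a$ inside $G_a$) and Corollary~\ref{cor:dist} do the real work, essentially reducing the new ring-based ciphertext to the scalar Cocks ciphertext analyzed in \cite{AG09}.
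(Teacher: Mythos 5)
Your reduction has a genuine gap at its central step: the treatment of the \emph{second} ciphertext component in the case $x \notin \QR(N)$. You compute both components of the challenge honestly as encryptions of $b_\beta$, the first with respect to $a = x$ and the second ``analogously with $-x$'', and then claim that when $x \notin \QR(N)$ the adversary's view is statistically independent of $\beta$. This is false for the pair. Since $p \equiv q \equiv 3 \pmod 4$, exactly one of $x, -x$ is a quadratic residue; so when $x$ has Jacobi symbol $+1$ but $x \notin \QR(N)$, we have $-x \in \QR(N)$, and — exactly the ``role swap'' you mention — the second component becomes the \emph{real}, decryptable one: for $\rho$ with $\rho^2 \equiv -x \pmod N$ one gets $\jacobi{d(\rho)}{N} = \nu(b_\beta)$ with overwhelming probability, so $\beta$ is information-theoretically present in the challenge. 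The $2$-versus-$2$ preimage-counting argument that gives statistical balance applies only to the component whose ``key'' is a non-residue modulo both primes, i.e.\ only to $c(x)$ in this case. Hence ``$\mathcal{A}$'s view is independent of $\beta$'' cannot be argued statistically, and arguing it computationally is circular without a further hybrid. The standard repair is to have the simulator generate the second component as an encryption of an \emph{independent random} bit (or a uniform element of the appropriate coset): when $x \in \QR(N)$ this is statistically close to honest, because then $-x$ is a non-residue modulo both primes and that component statistically hides its plaintext; when $x \notin \QR(N)$ the whole view is then genuinely independent of $\beta$. Without this modification (or an equivalent two-step hybrid, each step invoking QR) the reduction as you describe it does not go through; neither Lemma~\ref{lemma:alg} nor Corollary~\ref{cor:dist} supplies this missing piece.

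For comparison, the paper avoids re-doing the Cocks analysis altogether: it reduces the IND-ID-CPA security of $\aname{xhIBE}$ to that of the \emph{original Cocks scheme} (taken as known under QR in the ROM). The simulator picks $h \samplefrom \ZZ_N^\ast$, programs $H(\vname{id}) := H^\prime(\vname{id})\cdot h^{-2}$, answers key queries with $K(\vname{id})\cdot h^{-1}$, and converts the Cocks challenge $(c,d)$ into $(2hx + c,\ (2hx + d)\ast r(x))$ with $r(x)$ a fresh encryption of $0$, invoking Corollary~\ref{cor:shom} to argue the simulated challenge is perfectly distributed. Your direct-to-QR route is viable in principle and more self-contained, but once the two-component issue is fixed you would essentially be re-proving the Cocks security theorem; the paper's modular reduction (plus the $h$-randomization trick) sidesteps exactly the case analysis on which your proposal currently founders.
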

\begin{proof}
Let $\mathcal{A}$ be an adversary that breaks the IND-ID-CPA security of $\aname{xhIBE}$. We use
$\mathcal{A}$ to construct an algorithm $\mathcal{S}$ to break the
IND-ID-CPA security of the Cocks scheme with the same advantage. $\mathcal{S}$ proceeds as
follows:
\begin{enumerate}
\item
Uniformly sample an element $h \samplefrom \ZZ_N^\ast$. Receive the public parameters $\vname{PP}$ from the challenger
$\mathcal{C}$ and pass them to $\mathcal{A}$.
\item
$\mathcal{S}$ answers a query to $H$ for identity $\vname{id}$ with $H^\prime(\vname{id}) \cdot h^{-2}$ where
$H^\prime$ is $\mathcal{S}$'s random oracle. The responses are
uniformly distributed in $\ZZ_N[+1]$.
\item
$\mathcal{S}$ answers a key generation query for $\vname{id}$ with the
response $K(\vname{id}) \cdot h^{-1}$ where $K$ is its key generation
oracle.
\item
When $\mathcal{A}$ chooses target identity $\vname{id}^\ast$,
$\mathcal{S}$ relays $\vname{id}^\ast$ to $\mathcal{C}$. Assume w.l.o.g that $H$ has been queried for $\vname{id}$,
and that $\mathcal{A}$ has not made a secret key query for $\vname{id}^\ast$. Further key generation requests are handled subject to the
condition that $\vname{id} \neq \vname{id}^\ast$ for a requested
identity $\vname{id}$.
\item
Let $a = H(\vname{id}^\ast)$. On receiving a challenge ciphertext $(c, d)$ from $\mathcal{C}$, compute
$c(x) \gets 2hx + c \in R$ and $d(x) \gets (2hx + d) \ast r(x) \in R$ where
$r(x) \samplefrom S_{-a}^{(0)}$ and $S_{-a}^{(0)}$ is the second
component of the set of legal
encryptions of 0. From corollary \ref{cor:shom}, $d(x)$ is uniformly
distributed in $S_{-a}^{(b)}$ where the ciphertext $(c, d)$ in the Cocks
scheme encrypts the bit $b$. It follows that $(c(x), d(x))$ is a
perfectly simulated encryption of $b$ under identity $\vname{id}^\ast$
in $\aname{xhIBE}$. Give $(c(x), d(x))$ to $\mathcal{A}$.
\item
Output $\mathcal{A}$'s guess $b^\prime$.
\end{enumerate}

Since the view of $\mathcal{A}$ in an interaction with $\mathcal{S}$
is indistinguishable from its view in the real game, we conclude that
the advantage of $\mathcal{S}$ is equal to the advantage of $\mathcal{A}$.

\qed
\end{proof}

In the next section, attention is drawn to obtaining an anonymous variant of our construction.

\section{Anonymity}\label{sec:anon}
Cocks' scheme is notable as one of the few IBE schemes that
do not rely on pairings. Since it appeared, there have been efforts to
reduce its ciphertext size and make it anonymous. Boneh, Gentry and
Hamburg \cite{BGH07} proposed a scheme with some elegant ideas that
achieves both anonymity and a much reduced ciphertext size for
multi-bit messages at the
expense of performance, which is $O(n^4)$ for encryption and $O(n^3)$
for decryption (where $n$ is the security
parameter). Unfortunately the homomorphic property is lost in this
construction.

As mentioned earlier (cf. Section
\ref{sec:cocks_orig_anon}), another approach due to Ateniese and
Gasti \cite{AG09} achieves anonymity and preserves performance,
but its per-bit ciphertext expansion is much higher than in
\cite{BGH07}. However, an advantage of this scheme is that it is
universally anonymous (anyone can anonymize the message, not merely
the encryptor \cite{Hayashi:2005}).

On the downside, anonymizing
according to this scheme breaks the homomorphic property of our
construction, which depends crucially on the public key $a$. More
precisely, what is forfeited is the \emph{universal homomorphic}
property mentioned in the introduction (i.e. anyone can evaluate
on the ciphertexts without additional information). There are
applications where an evaluator is aware of the attribute(s)
associated with ciphertexts, but anonymity is desirable to prevent any
other parties in the system learning about such attributes.  This motivates a variant of
HPE, which we call non-universal HPE, denoted by
$\vname{HPE}_{\bar{U}}$. 

\subsection{Non-Universal HPE}
\textbf{Motivation}
``Non-universal'' homomorphic encryption is proposed for schemes that support attribute privacy but require some information that is derivable from the public key (or attribute in the case of PE) in order to perform homomorphic evaluation. Therefore, attribute privacy must be surrendered to an evaluator. If this is acceptable for an application, while at the same time there is a requirement to hide the target recipient(s) from other entities in the system, then ``non-universal'' homomorphic encryption may be useful. Consider the following informal scenario. Suppose a collection of parties $P_1, \hdots, P_\ell$ outsource a computation on their encrypted data sets to an untrusted remote server $S$. Suppose $S$ sends the result (encrypted) to an independent database $\vname{DB}$ from which users can retrieve the encrypted records. For privacy reasons, it may be desirable to limit the information that $\vname{DB}$ can learn about the attributes associated with the ciphertexts retrieved by certain users. Therefore, it may desirable for the encryption scheme to provide attribute privacy. However, given the asymmetric relationship between the delegators $P_1, \hdots, P_\ell$ and the target recipient(s), it might be acceptable for $S$ to learn the target attribute(s) provided there is no collusion between $S$ and $\vname{DB}$. In fact, the delegators may belong to a different organization than the recipient(s).

In this paper, we introduce a syntax and security model for non-universal homomorphic IBE. The main change in syntax entails an
additional input $\alpha$ that is supplied to the $\aname{Eval}$
algorithm. The input $\alpha \in \{0, 1\}^d$
(where $d
= poly(\lambda)$) models the additional information needed to
compute the homomorphism(s). A description of an efficient map $Q_A : A
\to \{0, 1\}^d$ is included in the public parameters. We say that two attributes(i.e. identities in IBE) $a_1, a_2 \in A$ satisfying $Q_A(a_1) = Q_A(a_2)$ belong to the same attribute class. 

One reason that the proposed syntax is not general enough for arbitrary PE functionalities is that it only facilitates evaluation on ciphertexts whose attributes are in the same attribute class, which suffices for (relatively) simple functionalities such as IBE.

We now formulate the security notion of attribute-hiding for
non-universal homomorphic IBE. Our security model provides the
adversary with an evaluation oracle whose identity-dependent input
$\alpha$ is fixed when the challenge is produced. Accordingly,  for a
challenge identity $\vname{id} \in A$, and binary string $\alpha = Q_A(\vname{id}) \in \{0,
1\}^d$,  the
adversary can query $\aname{{IBE}_{\bar{U}}.Eval}(\vname{PP}, \alpha, \cdot,
\cdot)$ for any circuit in $\mathbb{C}$ and any $\ell$-length sequence
of ciphertexts. 

Formally, consider the experiment

\textbf{Experiment} $\boldsymbol{\aname{{\bar{U}Priv}}(\mathcal{A}_1,
  \mathcal{A}_2)}$\footnote{In the random oracle model, the adversary
  is additionally given access to a random oracle. This is what the
  results in this paper will use.}
\begin{algorithmic}
\State
\indent$(\vname{PP}, \vname{MSK}) \gets
\aname{IBE.Setup(1^\lambda})$
\State
\indent$(\vname{id}_0, m_0), (\vname{id}_1, m_1), \sigma \gets
\mathcal{A}_1^{\aname{{IBE}_{\bar{U}}.KeyGen}(\vname{MSK},
  \cdot)}(\vname{PP})$ \Comment $\sigma$ denotes the adversary's state
\State
\indent $b \samplefrom \{0, 1\}$
\State
\indent $\alpha \gets Q_A(\vname{id}_b)$
\State
\indent $c \gets \aname{IBE.Encrypt}(\vname{PP}, \vname{id}_b, m_b)$
\State
\indent $b^\prime \gets \mathcal{A}_2^{\aname{{IBE}_{\bar{U}}.KeyGen}^\ast(\vname{MSK},
  \cdot), \aname{{IBE}_{\bar{U}}.Eval}(\vname{PP}, \alpha, \cdot, \cdot)}(\vname{PP}, c, \sigma)$
\State
\indent \Return 1 iff $b^\prime = b$ and 0 otherwise.
\end{algorithmic}

Define the advantage of an adversary $\mathcal{A} := (\mathcal{A}_1, \mathcal{A}_2)$
in the above experiment for a $\vname{IBE}_{\bar{U}}$ scheme $\mathcal{E}$ as follows:
\begin{equation*}
\adv{\mathcal{E}}{\aname{{\bar{U}Priv}}}(\mathcal{A}) =
\pr\big[\aname{\bar{U}Priv}(\mathcal{A}) \implies 1\big] - \frac{1}{2}.
\end{equation*}

A $\vname{IBE}_{\bar{U}}$ scheme $\mathcal{E}$ is said to be attribute-hiding if
for all pairs of PPT algorithms $\mathcal{A} := (\mathcal{A}_1,
\mathcal{A}_2)$, it holds that
$\adv{\mathcal{E}}{\aname{{\bar{U}Priv}}}(\mathcal{A}) \leq \fname{negl}(\lambda)$. Note that the above definition assumes adaptive adversaries, but can be easily modified to accommodate the
non-adaptive case.

\subsection{Universal Anonymizers}
We now present an abstraction called a \emph{universal anonymizer}. With its help, we can
transform a universally-homomorphic, non-attribute-hiding IBE scheme
$\mathcal{E}$ into a non-universally homomorphic, attribute-hiding
scheme $\mathcal{E}^\prime$. In accordance with the property of
universal anonymity proposed in \cite{Hayashi:2005}, any party can
anonymize a given ciphertext.

Let $\mathcal{E} := (\aname{Setup}, \aname{KeyGen}, \aname{Encrypt},
\aname{Decrypt}, \aname{Eval})$ be a PE scheme parameterized with message space
$M$, attribute space $A$, class of predicates $\mathcal{F}$, and
class of circuits $\mathbb{C}$. Denote its ciphertext space by $\mathcal{C}$. Note that this definition of a universal anonymizer only suffices for simple functionalities such as IBE.

\begin{definition}\label{def:ua}
A \emph{universal anonymizer} $U_{\mathcal{E}}$ for a PE scheme 
$\mathcal{E}$ is a tuple $(\mathcal{G}, \mathcal{B},
\mathcal{B}^{-1}, Q_A, Q_{\mathcal{F}})$ where $\mathcal{G}$ is a
deterministic algorithm, $\mathcal{B}$ and $\mathcal{B}^{-1}$ are
randomized algorithms, and $Q_A$ and $Q_{\mathcal{F}}$ are efficient
maps, defined as follows:

\begin{itemize}
\item
$\mathbf{\aname{\mathcal{G}}(\vname{PP})}$:

On input the public parameter of an instance of $\mathcal{E}$, output
a parameters structure $\vname{params}$. This contains a description of
a modified ciphertext space $\hat{\mathcal{C}}$ as well as an integer
$d = \fname{poly}(\lambda)$ indicating the length of binary strings representing an attribute
class.

\item
$\mathbf{\aname{\mathcal{B}}(\vname{params}, \vec{c})}$:

On input parameters $\vname{params}$ and a ciphertext $\vec{c} \in
\mathcal{C}$, output an element of 
$\hat{\mathcal{C}}$.

\item
$\mathbf{\aname{\mathcal{B}}^{-1}(\vname{params}, \alpha, \hat{\vec{c})}}$:

On input parameters $\vname{params}$, a binary string $\alpha \in \{0, 1\}^d$ and an element of $\hat{\mathcal{C}}$, output an
element of $\mathcal{C}$.
\item
Both maps $Q_A$ and $Q_{\mathcal{F}}$ are indexed by $\vname{params}$:
${Q_A}_{\vname{params}} : A \to \{0, 1\}^d$ and ${Q_{\mathcal{F}}}_{\vname{params}} :
\mathcal{F} \to \{0, 1\}^d$.
\end{itemize}
\end{definition}
Note: $\vname{params}$ can be assumed to be an implicit input; it will
not be explicitly specified to simplify notation.

The binary string $\alpha$ is computed by means
of a map $Q_A : A \to \{0, 1\}^d$. In order for a decryptor to invert
$\mathcal{B}$, $\alpha$ must also be computable from any predicate that is satisfied by
an attribute that maps onto $\alpha$. Therefore, the map
$Q_{\mathcal{F}} : \mathcal{F} \to \{0, 1\}^d$ has the property that
for all $a \in A$ and $f \in \mathcal{F}$: 
\begin{equation*}
f(a) = 1 \implies Q_A(a) = Q_{\mathcal{F}}(f).
\end{equation*}
We define an equivalence relation $\sim$ on
$\mathcal{F}$ given by
\begin{equation*}
f_1 \sim f_2 \defeq Q_{\mathcal{F}}(f_1) = Q_{\mathcal{F}}(f_2).
\end{equation*} We have that 
\begin{equation*}
f \sim g  \iff \exists h_1, \hdots, h_k \in \mathcal{F}\quad \fname{supp}(f) \cap \fname{supp}(h_1) \neq \emptyset \land \hdots \land \fname{supp}(h_k) \cap \fname{supp}(g) \neq \emptyset.
\end{equation*}
It follows that each $\alpha$ is a
representative of an equivalence class in $\mathcal{F} /
\sim$. As a result, as mentioned earlier, our definition of a universal anonymizer above is only meaningful for ``simple'' functionalities such as IBE. For example, $|\mathcal{F} / \sim| = |\mathcal{F}|$ for an IBE scheme whose ciphertexts leak the recipient's identity.

Let $c$ be a ciphertext associated with an attribute $a$. Let $\alpha =
Q_A(a)$. Informally, 
$c^\prime := \mathcal{B}^{-1}(\alpha, \mathcal{B}(c))$ should ``behave'' like $c$; that is, (1)
it
should have the same homomorphic ``capacity'' and (2) decryption with a secret
key for any $f$ should have the same output as that for $c$.
A stronger requirement captured in our formal correctness criterion defined Appendix \ref{ap:unianon} is that $c$ and $c^\prime$ should be indistinguishable even when a distinguisher is given
access to $\vname{MSK}$.

A universal anonymizer is employed in the following
generic transformation from a universally-homomorphic, non-attribute-hiding IBE scheme
$\mathcal{E}$ to a non-universally homomorphic, attribute-hiding
scheme $\mathcal{E}^\prime$. 

The transformation is achieved by setting:
\begin{itemize}
\item
$
\aname{\mathcal{E}^\prime.Encrypt}(\vname{PP}, a, m) :=$
\begin{algorithmic}
\State
$\mathcal{B}(\aname{\mathcal{E}.Encrypt}(\vname{PP}, a, m))
$
\end{algorithmic}
\item
$
\aname{\mathcal{E}^\prime.Decrypt}(\vname{SK}_f, c) :=$
\begin{algorithmic}
\State
$\aname{\mathcal{E}.Decrypt}(\vname{SK}_f,
\mathcal{B}^{-1}(Q_{\mathcal{F}}(f), c))$
\end{algorithmic}
\item
$\aname{\mathcal{E}^\prime.Eval}(\vname{PP}, \alpha, C, c_1, \hdots,
c_\ell) := $

\begin{algorithmic}
\State
\Return $\mathcal{B}(\aname{\mathcal{E}.Eval}(\vname{PP}, C, \mathcal{B}^{-1}(\alpha, c_1), \hdots,
\mathcal{B}^{-1}(\alpha, c_\ell)))$
\end{algorithmic}

\end{itemize}
Denote the above transformation by
$T_{U_{\mathcal{E}}}(\mathcal{E})$. We leave to future work the task
  of establishing (generic) sufficient conditions that $\mathcal{E}$
  must satisfy to ensure that  $\mathcal{E}^\prime :=
T_{U_{\mathcal{E}}}(\mathcal{E})$ is an attribute-hiding
$\vname{HPE}_{\bar{U}}$ scheme.

An instantiation of a universal anonymizer for our XOR homomorphic
scheme is given in Appendix \ref{ap:unianon_inst}.

\subsection{Applications (Brief Overview)}\label{sec:app}
It turns out that XOR-homomorphic cryptosystems have been considered
to play an important part in
several applications.  The most well-known and widely-used \emph{unbounded} XOR-homomorphic
public-key cryptosystem is Goldwasser-Micali (GM) \cite{GM82}, which
is based on the quadratic residuosity problem. Besides being used in protocols such as private information retrieval (PIR), GM has been employed in some specific applications such as:
\begin{itemize}
\item
Peng, Boyd and Dawson (PBD) \cite{Peng:2005} propose a sealed-bid auction
system that makes extensive use of the GM cryptosystem.
\item
Bringer et al. \cite{Bringer:2007} apply GM to biometric authentication. It is used in two
primary ways; (1) to achieve PIR and (2) to assist in computing the hamming
distance between a recorded biometric template and a reference
one.
\end{itemize}
Perhaps in some of these applications, a group-homomorphic identity-based scheme may be of import, although the authors concede that no specific usage scenario has been identified so far.

With regard to performance, our construction requires $8$
multiplications in $\ZZ_N$ for a single homomorphic operation in
comparison to a single multiplication in
GM. Furthermore, the construction has higher ciphertext expansion than
GM by a factor of $4$. Encryption involves 2 modular inverses and 6
multiplications (only 4 if the strongly homomorphic property is
forfeited). In comparison, GM only requires $1.5$ multiplications on
average.

\section{Conclusions and Future Work}\label{sec:conclusions}
We have presented a characterization of homomorphic encryption in the
PE setting and classified schemes based on the properties of their
attribute homomorphisms. Instantiations of certain homomorphic properties were presented for
inner-product PE. However, it is clear that meaningful attribute homomorphisms
are limited. We leave to future work the exploration of homomorphic encryption with
access policies in a more general setting .

In this paper, we introduced a new XOR-homomorphic variant of
the Cocks' IBE scheme and showed that it is strongly
homomorphic. However, we failed to fully preserve the homomorphic
property in anonymous variants; that is, we could not construct an
anonymous universally-homomorphic variant. We leave this as an open
problem. As a compromise, however, a weaker primitive (non-universal IBE) was introduced
along with a related security notion. Furthermore, a transformation
strategy adapted from the work of Ateniese and
Gasti \cite{AG09} was exploited to obtain anonymity for our
XOR-homomorphic construction in this
weaker primitive. 

In future work, it is hoped to construct other group homomorphic IBE
schemes, and possibly for more general classes of predicates than the
IBE functionality.

Noteworthy problems, which we believe are still open:
\begin{enumerate}
\item
Somewhat-homomorphic IBE scheme (even non-adaptive security in the
ROM)
\item
(Unbounded) Group homomorphic IBE schemes for $(\ZZ_m, +)$ where $m =
O(2^\lambda)$ and $(\ZZ^\ast_p, \ast)$ for prime $p$.
Extensions include anonymity and support for a wider class of
predicates beyond the IBE functionality.
\end{enumerate}

\bibliographystyle{splncs}
\small{\bibliography{../bibliography/main}}

\ifx \undefined \cprime \def \cprime {$\mathsurround=0pt '$}\fi\ifx \undefined
  \Dbar \def \Dbar {\leavevmode\raise0.2ex\hbox{--}\kern-0.5emD} \fi\ifx
  \undefined \mathbb \def \mathbb #1{{\bf #1}}\fi\ifx \undefined \mathrm \def
  \mathrm #1{{\rm #1}}\fi\ifx \undefined \operatorname \def \operatorname
  #1{{\rm #1}}\fi\hyphenation{ Aba-di Arch-ives Ding-yi for-ge-ry
  Go-pa-la-krish-nan Hi-de-ki Kraw-czyk Lands-verk Law-rence Leigh-ton Mich-ael
  Moell-er North-ridge para-digm para-digms Piep-rzyk Piv-e-teau Ram-kilde
  Re-tro-fit-ting Rich-ard Sho-stak Si-ro-mo-n-ey Ste-ph-en The-o-dore Tho-m-as
  Tzone-lih venge-ance Will-iam Ye-sh-i-va }
\begin{thebibliography}{10}

\bibitem{Gentry2009}
Gentry, C.:
\newblock Fully homomorphic encryption using ideal lattices.
\newblock Proceedings of the 41st annual ACM Symposium on Theory of Computing
  STOC 09 (2009)  169

\bibitem{Smart10}
Smart, N., Vercauteren, F.:
\newblock Fully homomorphic encryption with relatively small key and ciphertext
  sizes.
\newblock In Nguyen, P., Pointcheval, D., eds.: Public Key Cryptography -- PKC
  2010. Volume 6056 of Lecture Notes in Computer Science.
\newblock Springer Berlin / Heidelberg (2010)  420--443

\bibitem{Dijk10}
van Dijk, M., Gentry, C., Halevi, S., Vaikuntanathan, V.:
\newblock Fully homomorphic encryption over the integers.
\newblock In Gilbert, H., ed.: Advances in Cryptology -- EUROCRYPT 2010. Volume
  6110 of Lecture Notes in Computer Science.
\newblock Springer Berlin / Heidelberg (2010)  24--43

\bibitem{BV11a}
Brakerski, Z., Vaikuntanathan, V.:
\newblock {Fully Homomorphic Encryption from Ring-LWE and Security for Key
  Dependent Messages, Advances in Cryptology -- CRYPTO 2011}.
\newblock Volume 6841 of Lecture Notes in Computer Science.
\newblock Springer Berlin / Heidelberg, Berlin, Heidelberg (2011)  505--524

\bibitem{Brakerski2011b}
Brakerski, Z., Vaikuntanathan, V.:
\newblock {Efficient Fully Homomorphic Encryption from (Standard) LWE}.
\newblock Cryptology ePrint Archive, Report 2011/344 (2011)
  \url{http://eprint.iacr.org/}.

\bibitem{GM82}
Goldwasser, S., Micali, S.:
\newblock Probabilistic encryption \& how to play mental poker keeping secret
  all partial information.
\newblock In: Proceedings of the fourteenth annual ACM symposium on Theory of
  computing. STOC '82, New York, NY, USA, ACM (1982)  365--377

\bibitem{Paillier:1999:PKC}
Paillier, P.:
\newblock Public-key cryptosystems based on composite degree residuosity
  classes.
\newblock In Stern, J., ed.: EUROCRYPT. Volume 1592 of Lecture Notes in
  Computer Science., Springer (1999)  223--238

\bibitem{ElGamal:1985:PKCa}
ElGamal, T.:
\newblock A public key cryptosystem and a signature scheme based on discrete
  logarithms.
\newblock IEEE Transactions on Information Theory \textbf{31} (1985)  469--472

\bibitem{Katz:2008}
Katz, J., Sahai, A., Waters, B.:
\newblock Predicate encryption supporting disjunctions, polynomial equations,
  and inner products.
\newblock In: Proceedings of the theory and applications of cryptographic
  techniques 27th annual international conference on Advances in cryptology.
  EUROCRYPT'08, Berlin, Heidelberg, Springer-Verlag (2008)  146--162

\bibitem{Boneh11}
Boneh, D., Sahai, A., Waters, B.:
\newblock Functional encryption: Definitions and challenges.
\newblock In Ishai, Y., ed.: Theory of Cryptography. Volume 6597 of Lecture
  Notes in Computer Science.
\newblock Springer Berlin / Heidelberg (2011)  253--273

\bibitem{Gennaro:2010}
Gennaro, R., Gentry, C., Parno, B.:
\newblock Non-interactive verifiable computing: outsourcing computation to
  untrusted workers.
\newblock In: Proceedings of the 30th annual conference on Advances in
  Cryptology. CRYPTO'10, Berlin, Heidelberg, Springer-Verlag (2010)  465--482

\bibitem{Lopez-Alt:2012}
L\'{o}pez-Alt, A., Tromer, E., Vaikuntanathan, V.:
\newblock On-the-fly multiparty computation on the cloud via multikey fully
  homomorphic encryption.
\newblock In: Proceedings of the 44th symposium on Theory of Computing. STOC
  '12, New York, NY, USA, ACM (2012)  1219--1234

\bibitem{NaccacheTalk10}
Naccache, D.:
\newblock Is theoretical cryptography any good in practice? (2010) Talk given
  at CHES 2010 and Crypto 2010.

\bibitem{Brakerski:2011_v1}
Brakerski, Z., Vaikuntanathan, V.:
\newblock {Efficient Fully Homomorphic Encryption from (Standard) LWE}.
\newblock Cryptology ePrint Archive, Report 2011/344 Version: 20110627:080002
  (2011) \url{http://eprint.iacr.org/}.

\bibitem{GPV}
Gentry, C., Peikert, C., Vaikuntanathan, V.:
\newblock Trapdoors for hard lattices and new cryptographic constructions.
\newblock In: STOC '08: Proceedings of the 40th annual ACM symposium on Theory
  of computing, New York, NY, USA, ACM (2008)  197--206

\bibitem{Gentry:2010}
Gentry, C., Halevi, S., Vaikuntanathan, V.:
\newblock {A Simple BGN-Type Cryptosystem from LWE}.
\newblock In Gilbert, H., ed.: EUROCRYPT. Volume 6110 of Lecture Notes in
  Computer Science., Springer (2010)  506--522

\bibitem{BGN05}
Boneh, D., Goh, E.J., Nissim, K.:
\newblock Evaluating 2-{DNF} formulas on ciphertexts.
\newblock In Kilian, J., ed.: TCC. Volume 3378 of Lecture Notes in Computer
  Science., Springer (2005)  325--341

\bibitem{Benaloh:1994}
Benaloh, J.:
\newblock Dense probabilistic encryption.
\newblock In: In Proceedings of the Workshop on Selected Areas of Cryptography.
  (1994)  120--128

\bibitem{Galbraith:2002}
Galbraith, S.D.:
\newblock {Elliptic Curve Paillier Schemes}.
\newblock J. Cryptology \textbf{15} (2002)  129--138

\bibitem{Golle:2002}
Golle, P., Jakobsson, M., Juels, A., Syverson, P.:
\newblock Universal re-encryption for mixnets.
\newblock In: IN PROCEEDINGS OF THE 2004 RSA CONFERENCE, CRYPTOGRAPHER'S TRACK,
  Springer-Verlag (2002)  163--178

\bibitem{Gjosteen:2005}
Gj{\o}steen, K.:
\newblock Homomorphic cryptosystems based on subgroup membership problems.
\newblock In: Proceedings of the 1st international conference on Progress in
  Cryptology in Malaysia. Mycrypt'05, Berlin, Heidelberg, Springer-Verlag
  (2005)  314--327

\bibitem{Armknecht:2012}
Armknecht, F., Katzenbeisser, S., Peter, A.:
\newblock Group homomorphic encryption: characterizations, impossibility
  results, and applications.
\newblock Designs, Codes and Cryptography (2012)  1--24

\bibitem{Bellare01}
Bellare, M., Boldyreva, A., Desai, A., Pointcheval, D.:
\newblock Key-privacy in public-key encryption, Springer-Verlag (2001)
  566--582

\bibitem{Hayashi:2005}
Hayashi, R., Tanaka, K.:
\newblock Universally anonymizable public-key encryption.
\newblock In Roy, B.K., ed.: ASIACRYPT. Volume 3788 of Lecture Notes in
  Computer Science., Springer (2005)  293--312

\bibitem{AG09}
Ateniese, G., Gasti, P.:
\newblock Universally anonymous {IBE} based on the quadratic residuosity
  assumption.
\newblock In: Proceedings of the The Cryptographers' Track at the RSA
  Conference 2009 on Topics in Cryptology. CT-RSA '09, Berlin, Heidelberg,
  Springer-Verlag (2009)  32--47

\bibitem{Prabhakaran:2008}
Prabhakaran, M., Rosulek, M.:
\newblock Homomorphic encryption with cca security.
\newblock In: Proceedings of the 35th international colloquium on Automata,
  Languages and Programming, Part II. ICALP '08, Berlin, Heidelberg,
  Springer-Verlag (2008)  667--678

\bibitem{COCKS01}
Cocks, C.:
\newblock An identity based encryption scheme based on quadratic residues.
\newblock In: Proceedings of the 8th IMA International Conference on
  Cryptography and Coding, London, UK, Springer-Verlag (2001)  360--363

\bibitem{BGH07}
Boneh, D., Gentry, C., Hamburg, M.:
\newblock Space-efficient identity based encryption without pairings.
\newblock In: FOCS, IEEE Computer Society (2007)  647--657

\bibitem{GoldwasserHELecture11}
Goldwasser, S.:
\newblock Lecture: Introduction to homomorophic encryption (2011)
  \url{http://www.cs.bu.edu/~reyzin/teaching/s11cs937/notes-shafi-1.pdf}. Last
  Checked on March 31st 2013.

\bibitem{AGVW12ePrint}
Agrawal, S., Gorbunov, S., Vaikuntanathan, V., Wee, H.:
\newblock Functional encryption: New perspectives and lower bounds.
\newblock Cryptology ePrint Archive, Report 2012/468 (2012)
  \url{http://eprint.iacr.org/}.

\bibitem{BO12ePrint}
Bellare, M., O'Neill, A.:
\newblock Semantically-secure functional encryption: Possibility results,
  impossibility results and the quest for a general definition.
\newblock Cryptology ePrint Archive, Report 2012/515 (2012)
  \url{http://eprint.iacr.org/}.

\bibitem{Wei:2009}
Wei, R., Ye, D.:
\newblock Delegate predicate encryption and its application to anonymous
  authentication.
\newblock In: Proceedings of the 4th International Symposium on Information,
  Computer, and Communications Security. ASIACCS '09, New York, NY, USA, ACM
  (2009)  372--375

\bibitem{Rothblum:2011}
Rothblum, R.:
\newblock Homomorphic encryption: From private-key to public-key.
\newblock In Ishai, Y., ed.: TCC. Volume 6597 of Lecture Notes in Computer
  Science., Springer (2011)  219--234

\bibitem{Boneh:2007}
Boneh, D., Waters, B.:
\newblock Conjunctive, subset, and range queries on encrypted data.
\newblock In: Proceedings of the 4th conference on Theory of Cryptography.
  TCC'07, Berlin, Heidelberg, Springer-Verlag (2007)  535--554

\bibitem{Alzaid:2008}
Alzaid, H., Foo, E., Nieto, J.G.:
\newblock Secure data aggregation in wireless sensor network: a survey.
\newblock In: Proceedings of the sixth Australasian conference on Information
  security - Volume 81. AISC '08, Darlinghurst, Australia, Australia,
  Australian Computer Society, Inc. (2008)  93--105

\bibitem{Agrawal11}
Agrawal, S., Freeman, D.M., Vaikuntanathan, V.:
\newblock Functional encryption for inner product predicates from learning with
  errors.
\newblock In Lee, D.H., Wang, X., eds.: ASIACRYPT. Volume 7073 of Lecture Notes
  in Computer Science., Springer (2011)  21--40

\bibitem{Boneh:2004}
Boneh, D., Crescenzo, G.D., Ostrovsky, R., Persiano, G.:
\newblock Public key encryption with keyword search.
\newblock In Cachin, C., Camenisch, J., eds.: EUROCRYPT. Volume 3027 of Lecture
  Notes in Computer Science., Springer (2004)  506--522

\bibitem{Armknecht:2010}
Armknecht, F., Katzenbeisser, S., Peter, A.:
\newblock Group homomorphic encryption: Characterizations, impossibility
  results, and applications.
\newblock Cryptology ePrint Archive, Report 2010/501 (2010)
  \url{http://eprint.iacr.org/}.

\bibitem{CHT2013af}
Clear, M., Hughes, A., Tewari, H.:
\newblock Homomorphic encryption with access policies: Characterization and new
  constructions.
\newblock In Youssef, A., Nitaj, A., Hassanien, A., eds.: Progress in
  Cryptology – AFRICACRYPT 2013. Volume 7918 of Lecture Notes in Computer
  Science.
\newblock Springer Berlin Heidelberg (2013)  61--87

\bibitem{Peng:2005}
Peng, K., Boyd, C., Dawson, E.:
\newblock A multiplicative homomorphic sealed-bid auction based on
  goldwasser-micali encryption.
\newblock In Zhou, J., Lopez, J., Deng, R.H., Bao, F., eds.: ISC. Volume 3650
  of Lecture Notes in Computer Science., Springer (2005)  374--388

\bibitem{Bringer:2007}
Bringer, J., Chabanne, H., Izabach\`{e}ne, M., Pointcheval, D., Tang, Q.,
  Zimmer, S.:
\newblock An application of the {Goldwasser-Micali} cryptosystem to biometric
  authentication.
\newblock In: Proceedings of the 12th Australasian conference on Information
  security and privacy. ACISP'07, Berlin, Heidelberg, Springer-Verlag (2007)
  96--106

\end{thebibliography}

\appendix
\section{Group-Homomorphic Encryption Generalized for PE}\label{ap:ghe}
\begin{definition}[Extension of Definition \ref{def:ghom}]
\label{def:ghom_ext}
Let $\mathcal{E} = (G, K, E, D)$ be a PE scheme with message space
$M$, attribute space $A$, ciphertext space $\hat{\mathcal{C}}$ and
class of predicates $\mathcal{F}$.  The
scheme $\mathcal{E}$ is group homomorphic if for
every $(\vname{PP}, \vname{MSK}) \gets G(1^\lambda)$, every $f \in
\mathcal{F} : \fname{supp}(f) \neq \emptyset$, and every $\vname{sk}_f \gets
K(\vname{MSK}, f)$, the
message space $(M, \cdot)$ is a non-trivial group, and there is a binary operation
$\ast : \hat{\mathcal{C}}^2 \to
\hat{\mathcal{C}}$ such that the following properties are satisfied
for the restricted ciphertext space
$\hat{\mathcal{C}_f} = \{c \in
\hat{\mathcal{C}} : D_{\vname{sk}_f}(c) \neq \bot\}$:

\begin{enumerate}
\item
$(\hat{\mathcal{C}_f}, \ast)$ is a non-trivial group.
\item
The set of all encryptions $\mathcal{C}_f = \{c \in \hat{\mathcal{C}_f}  \mid c \gets E(\vname{PP}, a, m), a \in \fname{supp}(f), m \in M\}$ is a subgroup of $\hat{\mathcal{C}_f}$ with respect to the operation $\ast$.
\item
The restricted decryption $D_{\vname{sk}_f}^\ast := D_{\vname{sk}_f |
  \hat{\mathcal{C}_f}}$ is surjective and $\forall c, c^\prime \in
\hat{\mathcal{C}_f} \quad D_{\vname{sk}_f}(c \ast c^\prime) =
D_{\vname{sk}_f}(c) \cdot D_{\vname{sk}_f}(c^\prime)$.
\item
The following distributions are computationally indistinguishable:
\begin{equation*}
\{(\vname{PP}, f, \vname{sk}_f, S, c) \mid c
\samplefrom \mathcal{C}_{f}, S \subset \{\vname{sk}_g \gets K(g) : g \in \mathcal{F}\}\} \cind \{(\vname{PP}, f, \vname{sk}_f, S, \hat{c}) \mid \hat{c}
\samplefrom \hat{\mathcal{C}_f}, S \subset \{\vname{sk}_g \gets K(g) : g \in \mathcal{F}\}\}.
\end{equation*}
\item
There is an efficient function $\tau : A \to \mathcal{F}$ such that for any $a \in A$, $f = \tau(a)$ satisfies
\[
g(a^\prime) = g(a) \text{ for all } a^\prime \in \fname{supp}(f), g \in \mathcal{F}.
\]
\end{enumerate}
\end{definition}
Armknecht, Katzenbeisser and Peter give a characterization of group-homomorphic public key cryptosystems  \cite{Armknecht:2012}. Their characterization includes the condition that the secret key contain an efficient predicate, or \emph{decision function}, $\delta : \hat{\mathcal{C}} \to \{0, 1\}$ satisfying
\[
\delta(c) = 1 \iff c \in \mathcal{C}
\]
where $\hat{\mathcal{C}}$ denotes the ciphertext space and $\mathcal{C} \subseteq \hat{\mathcal{C}}$ denotes the set of legally-generated ciphertexts under the public key (i.e. the image of the encryption algorithm over all messages and random coins). Now generalizing this to PE in the above definition yields a decision function $\delta_f : \hat{\mathcal{C}} \to \{0, 1\}$ with
\[
\delta_f(c) = 1 \iff c \in \mathcal{C}_f.
\]
We can show that such a decision function does not always exist. A counterexample is our XOR-homomorphic IBE system from Section \ref{sec:cocks}. Let $a = H(\vname{id})$ and for some identity $\vname{id}$. Let $f : A \to \{0, 1\}$ be the point function that is nonzero at exactly $\vname{id} \in A$. Then $\mathcal{C}_f$ corresponds to $\{(c(x), d(x), a) : c(x) \in S_a, d(x) \in S_{-a}\}$ and $\hat{\mathcal{C}}_f$ corresponds to $\{(c(x), d(x), a) : c(x) \in G_a, d(x) \in G_{-a}\}$. However, there is no efficient distinguisher that can distinguish between $S_a$ and $G_a$ (or $S_{-a}$ and $G_{-a}$) without access to the factorization of $N$ (i.e. the master secret key). It follows that there is no efficient decision function. This necessitates property 3 in the above definition in order to extend the abstract characterizations of IND-CCA1 security in \cite{Armknecht:2012} to the PE setting. Because of property 3, it suffices to define $\delta_f$ as
\begin{equation}\label{eq:delta_f}
\delta_f(c) = 1 \iff c \in \hat{\mathcal{C}}_f.
\end{equation}

We also extend the notion of GIFT (Generic shIFt-Type) from Definition 3 in \cite{Armknecht:2012}. We defer the reader to this paper for a formal definition of GIFT. Informally, a GIFT PE scheme satisfies the following:
\begin{itemize}
\item
The public parameters $\vname{PP}$ contains information to determine a non-trivial, proper normal subgroup $\mathcal{N}_f$ for every group $\mathcal{C}_f$.
\item
It holds that for every $f, g \in \mathcal{F}$, the systems of representatives $\mathcal{R}_f = \mathcal{C}_f / \mathcal{N}_f$ and $\mathcal{R}_f = \mathcal{C}_g / \mathcal{N}_g$ have the same cardinality; that is, $|\mathcal{R}_f| = |\mathcal{R}_g|$.
\item
$\vname{PP}$ contains an efficient function $\psi : \mathcal{F} \times M \to \hat{\mathcal{C}}$ with the property that $\psi_f = \psi(f, \cdot)$ for any $f \in \mathcal{F}$ is an isomorphism between $M$ and $\mathcal{R}_f$.
\item
To encrypt a message $m \in M$ under attribute $a \in A$, an encryptor:
\begin{enumerate}
\item
computes $f^\prime \gets \tau(a)$,
\item
chooses a random $n \samplefrom \mathcal{N}_{f^\prime}$,
\item
and outputs the ciphertext $\psi_{f^\prime}(m) \ast n \in \mathcal{C}_{f^\prime}$.
\end{enumerate}
\item
A secret key $\vname{sk}_f$ for predicate $f \in \mathcal{F}$ contains an efficient description of $\psi_f^{-1} \circ \mu_f$ where $\mu_f : \hat{\mathcal{C}_f} \to \mathcal{R}_f$ such that $r = \mu(c)$ is the unique representative with $c = r \ast n$ where $n \in \mathcal{N}_f$.
\end{itemize}

\subsection{Interactive Splitting Oracle-Assisted Subgroup Membership Problem (ISOAP)}
Let $G$ be a PPT algorithm that takes as input a security parameter $\lambda$ and outputs a tuple $(\hat{\mathcal{G}}, \mathcal{I}, \mathbb{G}, k)$ where $\hat{\mathcal{G}}$ is a finite semigroup, $\mathcal{I}$ is a set of indices and $\mathbb{G}$ and $k$ are defined momentarily. Firstly, $\mathbb{G}$ is a family $\{(\mathcal{G}_i, \mathcal{N}_i, \mathcal{R}_i)\}_{i \in \mathcal{I}}$ where $\mathcal{G}_i \subseteq \hat{\mathcal{G}}$ is a non-trivial group, $\mathcal{N}_i$ is a proper, non-trivial subgroup of $\mathcal{G}_i$ and $\mathcal{R}_i \subset \mathcal{G}_i$ is a finite set of representatives of $\mathcal{G}_i / \mathcal{N}_i$. It is required that $|R_i| = |R_j|$ for all $i, j \in \mathcal{I}$. Finally, $k$ is efficient trapdoor information that allows us to efficiently solve the splitting problem (SP) in any group $\mathcal{G}_i$; that is, given some $c \in \mathcal{G}_i$, the goal of SP is to find the unique $r \in \mathcal{R}_i$ and $n \in \mathcal{N}_i$ such that $c = r \ast n$. We let $K$ be a PPT algorithm that uses $k$ and takes an index $i \in \mathcal{I}$ as input, and outputs a description of an efficient function $\sigma_i : \mathcal{G}_i \to \mathcal{R}_i \times \mathcal{N}_i$. Such a function solves SP in $\mathcal{G}_i$.For brevity, we set $K^\prime := K_k$.

We define an interactive version of the problem $\aname{SOAP}$ from \cite{Armknecht:2012}, which we refer to as $\aname{ISOAP}$. This is a subgroup membership problem relating to a group chosen by the adversary who in addition is granted access to a ``splitting oracle'' for that group.

The game that defines $\aname{ISOAP}$ proceeds as follows. Prior to the challenge phase, the adversary is granted access to a ``splitting oracle'' $\mathcal{O}_{\vname{SP}}^{\hat{\mathcal{G}}, \mathcal{I}, \mathbb{G}, K}$ that takes an index $i \in \mathcal{I}$ and an element $c$ of $\hat{\mathcal{G}}$, and answers with $\bot$ if $c \notin \mathcal{G}_i$; otherwise, it answers with $\sigma_i(c)$. In addition, the adversary is given access to another oracle $\mathcal{O}_{K_1}^{\hat{\mathcal{G}}, \mathcal{I}, \mathbb{G}, K, Q}$ in the first phase which responds to a query for an index $i \in \mathcal{I}$ by storing $i$ in a cache $\mathcal{Q}$ and returning $K(i)$.

Then the adversary chooses a ``challenge'' group by specifying an index $\vname{ind} \in \mathcal{I}$ subject to the condition that $\mathcal{G}_{\vname{ind}} \cap \mathcal{G}_j = \emptyset$ for every $j \in Q$.  It receives a challenge element $c^\ast \in \mathcal{G}_{\vname{ind}}$.

In the second phase, the adversary is given access to a more restricted oracle $\mathcal{O}_{K_2}^{\hat{\mathcal{G}}, \mathcal{I}, \mathbb{G}, K, c^\ast}$ that when queried on index $i \in \mathcal{I}$, returns $K(i)$ if $c^\ast \notin \mathcal{G}_i$, and returns $\bot$ otherwise.
\paragraph
\noindent Experiment $\exper{(\mathcal{A}_1, \mathcal{A}_2), G, K}{\aname{ISOAP}}(\lambda)$:
\begin{enumerate}
\item
$(\hat{\mathcal{G}}, \mathcal{I}, \mathbb{G}, k) \gets G(\lambda)$. $K^\prime := K_k$.
\item
$s, \vname{ind} \gets \mathcal{A}_1^{\mathcal{O}_{\vname{SP}}^{\hat{\mathcal{G}}, \mathcal{I}, \mathbb{G}, K^\prime}(\cdot, \cdot), \mathcal{O}_{K_1}^{\hat{\mathcal{G}}, \mathcal{I}, \mathbb{G}, K^\prime, Q}}(\hat{\mathcal{G}}, \mathcal{I}, \mathbb{G})$.
\item
Choose $b \samplefrom \{0, 1\}$. If $b = 1$: $c^\ast \samplefrom \mathcal{G}_{\vname{ind}}$. Otherwise, $c^\ast \samplefrom \mathcal{N}_{\vname{ind}}$.
\item
$b^\prime \gets \mathcal{A}_2^{\mathcal{O}_{K_2}^{\hat{\mathcal{G}}, \mathcal{I}, \mathbb{G}, K^\prime, c^\ast}}(\hat{\mathcal{G}}, \mathcal{I}, \mathbb{G}, s, c^\ast)$
\item
Output $1$ if $b^\prime = b$. Output $0$ otherwise.
\end{enumerate}

\begin{theorem} Let $\mathcal{E} = (G, K, E, D)$ be a GIFT PE scheme. Then $\mathcal{E}$ is IND-AD-CCA1 secure if and only if $\aname{ISOAP}$ is hard relative to an algorithm $G^\prime$ that derives the tuple it outputs, namely $(\hat{\mathcal{C}}, \mathcal{F}, \mathbb{G} := \{(\hat{\mathcal{C}_f}. \mathcal{R}_f, \hat{\mathcal{N}_f)}\}_{f \in \mathcal{F}}, \vname{MSK})$, from $(\vname{PP}, \vname{MSK}) \gets G(\lambda)$ where $\hat{\mathcal{C}_f} = \mathcal{R}_f \ast \hat{\mathcal{N}_f}$ for every $f \in \mathcal{F}$.
\end{theorem}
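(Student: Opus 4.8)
The plan is to lift, essentially verbatim, the Armknecht--Katzenbeisser--Peter equivalence between IND-CCA1 security of a GIFT public-key cryptosystem and hardness of $\aname{SOAP}$ \cite{Armknecht:2012}, letting the interactivity of $\aname{ISOAP}$ (the adversary choosing its challenge group and obtaining the key oracles $\mathcal{O}_{K_1},\mathcal{O}_{K_2}$) absorb the adaptive key-generation queries that are new to the PE/IBE setting. First I would fix the dictionary between the two games: an ISOAP index is a predicate $f \in \mathcal{F}$, with $\mathcal{G}_f := \hat{\mathcal{C}_f}$, $\mathcal{N}_f := \hat{\mathcal{N}_f}$ and $\mathcal{R}_f$ as in the GIFT structure; the trapdoor $k$ is $\vname{MSK}$; and $K(f)$ --- the description of the splitting map $\sigma_f : \hat{\mathcal{C}_f} \to \mathcal{R}_f \times \hat{\mathcal{N}_f}$ --- is exactly what a GIFT secret key $\vname{sk}_f = \psi_f^{-1} \circ \mu_f$ encodes, and conversely (together with $\psi_f$ from $\vname{PP}$) suffices to decrypt. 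Under this identification $\mathcal{O}_{\vname{SP}}$ is the phase-1 decryption oracle (split $c$ into $(r,n)$ $\leftrightarrow$ return $\psi_f^{-1}(r) = D_{\vname{sk}_f}(c)$, translating with $\psi_f$ resp.\ $\psi_f^{-1}|_{\mathcal{R}_f}$, and returning $\bot$ on non-members since $\hat{\mathcal{C}_f}$ is by definition the non-$\bot$ decryption set), $\mathcal{O}_{K_1}/\mathcal{O}_{K_2}$ are the phase-1/phase-2 key-extraction oracles, and the side condition $\mathcal{G}_{\vname{ind}} \cap \mathcal{G}_j = \emptyset$ is, via property~5 of Definition~\ref{def:ghom_ext} (the map $\tau$ supplying point predicates for IBE-like functionalities), precisely the IND-AD-CCA1 requirement that no extracted key decrypt the challenge. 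I also identify $\vname{PP}$ with the instance $(\hat{\mathcal{C}}, \mathcal{F}, \mathbb{G})$ that $G'$ derives from it.

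For ``$\aname{ISOAP}$ hard $\implies$ $\mathcal{E}$ IND-AD-CCA1 secure'' I argue the contrapositive: from an IND-AD-CCA1 adversary $\mathcal{A}$ with distinct challenge messages $m_0 \neq m_1$ I build an ISOAP distinguisher $\mathcal{B}$. $\mathcal{B}$ hands $\mathcal{A}$ the reconstructed $\vname{PP}$, forwards $\mathcal{A}$'s key queries to $\mathcal{O}_{K_1}/\mathcal{O}_{K_2}$ and its decryption queries to $\mathcal{O}_{\vname{SP}}$ (translated as above), and on $\mathcal{A}$'s challenge identity $\vname{id}^\ast$ sets $\vname{ind} = \tau(\vname{id}^\ast) =: f^\ast$ and picks $\beta \samplefrom \{0,1\}$. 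Given the ISOAP challenge $c^\ast \in \mathcal{G}_{f^\ast}$, $\mathcal{B}$ returns $c' := \psi_{f^\ast}(m_\beta) \ast c^\ast$. If the ISOAP bit is $0$ then $c^\ast$ is (up to the property-3 slack of Definition~\ref{def:ghom_ext}) uniform on $\hat{\mathcal{N}_{f^\ast}}$, i.e.\ a fresh encryption of the identity of $M$, so $c'$ is a fresh encryption of $m_\beta$ and $\mathcal{A}$ recovers $\beta$ with its full advantage; if the bit is $1$ then $c^\ast$ is uniform in the group $\hat{\mathcal{C}_{f^\ast}}$, hence so is $c'$, which is then independent of $\beta$ and $\mathcal{A}$ succeeds with probability $\tfrac12$. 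Outputting ``bit $=0$'' exactly when $\mathcal{A}$'s guess is correct gives $\mathcal{B}$ advantage $\approx \tfrac12$ times that of $\mathcal{A}$, up to negligible loss from property~3.

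For the converse, ``$\mathcal{E}$ IND-AD-CCA1 secure $\implies$ $\aname{ISOAP}$ hard'', I again take the contrapositive: from an ISOAP distinguisher $\mathcal{B} = (\mathcal{B}_1, \mathcal{B}_2)$ I build an IND-AD-CCA1 adversary $\mathcal{A}$. $\mathcal{A}$ receives $\vname{PP}$, derives $(\hat{\mathcal{C}}, \mathcal{F}, \mathbb{G})$, runs $\mathcal{B}_1$ answering $\mathcal{O}_{\vname{SP}}$ with $\mathcal{A}$'s decryption oracle and $\mathcal{O}_{K_1}$ with $\mathcal{A}$'s key-extraction oracle (recording the queried predicates). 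When $\mathcal{B}_1$ commits to a challenge index $f^\ast = \tau(\vname{id}^\ast)$ --- which by the $\mathcal{O}_{K_1}$ disjointness constraint is not decrypted by any previously extracted key, matching the IND-AD-CCA1 restriction --- $\mathcal{A}$ submits $\vname{id}^\ast$ and the pair $(m_0, m_1) := (1_M, m)$ with $m \samplefrom M$. The returned $E(\vname{id}^\ast, m_\gamma)$ is, up to the property-3 slack, uniform on $\hat{\mathcal{N}_{f^\ast}}$ when $\gamma = 0$ and on $\hat{\mathcal{C}_{f^\ast}}$ when $\gamma = 1$, so $\mathcal{A}$ feeds it to $\mathcal{B}_2$ as the ISOAP challenge (secretly with bit $\gamma$), answers $\mathcal{O}_{K_2}$ via the still-legal key-extraction oracle, and outputs $\mathcal{B}_2$'s guess, inheriting $\mathcal{B}$'s advantage up to negligible loss.

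The main obstacle I expect is not the group-theoretic embedding of the challenge --- that is the standard GIFT trick --- but making the bookkeeping between the two games \emph{faithful}: one must verify that under property~5 of Definition~\ref{def:ghom_ext} the IBE-like structure makes ``predicate $f$ decrypts the challenge ciphertext'' coincide with ``$\hat{\mathcal{C}_f} \cap \hat{\mathcal{C}_{f^\ast}} \neq \emptyset$'', so that the $\aname{ISOAP}$ disjointness conditions and the IND-AD-CCA1 key-query restrictions are interchangeable in both directions; and one must carry the property-3 computational indistinguishability of $\hat{\mathcal{C}_f}$, $\hat{\mathcal{N}_f}$ from the honestly generated sets through every hybrid so that a uniform element of $\hat{\mathcal{N}_f}$ (resp.\ $\hat{\mathcal{C}_f}$) may legitimately be treated as a fresh encryption of $1_M$ (resp.\ of a uniform message). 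A secondary point to nail down is that $\psi_f$ is efficiently computable and efficiently invertible on $\mathcal{R}_f$ from $\vname{PP}$, which is what makes the $\mathcal{O}_{\vname{SP}} \leftrightarrow$ decryption-oracle translation go through in both directions; if only one direction of that translation is available the argument still works, but the correspondence with $\aname{ISOAP}$ must then be stated asymmetrically.
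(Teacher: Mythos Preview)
Your proposal is correct and follows essentially the same route as the paper's proof, which is itself an adaptation of the Armknecht--Katzenbeisser--Peter argument to the interactive setting. The dictionary you set up (indices $\leftrightarrow$ predicates, $\mathcal{O}_{\vname{SP}} \leftrightarrow$ decryption oracle, $\mathcal{O}_{K_1}/\mathcal{O}_{K_2} \leftrightarrow$ key-extraction oracles) and the forward-direction reduction $c' := \psi_{f^\ast}(m_\beta) \ast c^\ast$ match the paper exactly. In the converse direction there is one cosmetic difference worth noting: the paper chooses \emph{both} challenge messages uniformly at random and then post-processes the returned ciphertext as $c' \gets c^\ast \ast E(a^\ast, m_0)^{-1} \ast n'$ with $n' \samplefrom \mathcal{N}_{f^\ast}$, whereas you set $m_0 := 1_M$ and feed the raw challenge straight to the ISOAP distinguisher. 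Your version is cleaner when $\tau(\vname{id}^\ast) = f^\ast$ (in particular for IBE); the paper's extra $\ast\, n'$ re-randomization is what lets the simulation go through when the ISOAP adversary's chosen index $f^\ast$ is strictly coarser than $\tau(a^\ast)$ for the attribute $a^\ast \in \fname{supp}(f^\ast)$ one is forced to pick, since then one only has $\mathcal{N}_{\tau(a^\ast)} \subseteq \mathcal{N}_{f^\ast}$. This is precisely the bookkeeping issue you flagged under property~5, so you have already identified where the care is needed.
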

\begin{proof}[sketch]
The proof is similar to the proof of Theorem 3 in \cite{Armknecht:2012}.

Firstly, we show that the hardness of $\aname{ISOAP}$ implies the IND-AD-CCA1 security of $\mathcal{E}$. Suppose $\aname{ISOAP}$ is hard. Assume that $\mathcal{E}$ is not IND-AD-CCA1 secure. Then there is an algorithm $\mathcal{B}$ that has a non-negligible advantage $\epsilon$ attacking the IND-AD-CCA1 security of $\mathcal{E}$. This algorithm can be used to construct an adversary $\mathcal{A}^{\vname{ISOAP}} = (\mathcal{A}^{\vname{ISOAP}}_1, \mathcal{A}^{\vname{ISOAP}}_2)$ that obtains an advantage of $\frac{1}{2}\epsilon$ against $\aname{ISOAP}$.  Now $\mathcal{A}^{\vname{ISOAP}}_1$ can simulate $\vname{PP}$ and forward it to $\mathcal{B}$. It handles a secret key query for $f \in \mathcal{F}$ by querying its oracle $\mathcal{O}_{K_1}$. Furthermore, it handles a decryption query for $(f, c)$ where $f \in \mathcal{F}$ and $c \in \hat{\mathcal{C}}$ by returning $\bot$ if $\delta_f(c) = 0$ (see the definition of $\delta_f$ in Equation \ref{eq:delta_f}) and responding with $\psi_f^{-1}(r)$ otherwise, where $(r, n) \gets \mathcal{O}_\vname{SP}(f, c)$. When $\mathcal{B}$ chooses a target attribute $a^\ast$ and two messages $(m_0, m_1)$, $\mathcal{A}^{\vname{ISOAP}}_1$ computes its target index $f^\ast = \tau(a^\ast)$ and forwards it to the $\aname{ISOAP}$ challenger. $\mathcal{A}^{\vname{ISOAP}}_2$ derives an IND-AD-CCA1 challenge ciphertext from its $\aname{ISOAP}$ challenge $c^\ast$ by choosing a random bit $t \samplefrom \{0, 1\}$ and computing $c^\prime = \psi_{f^\ast}(m_t) \ast c^\ast$. It hands $c^\prime$ to $\mathcal{B}$. $\mathcal{A}^{\vname{ISOAP}}_2$ responds to secret key queries by using its oracle $\mathcal{O}_{K_2}$ in a similar manner to $\mathcal{A}^{\vname{ISOAP}}$.
 Finally, it outputs $t \oplus b^\prime$ where $b^\prime$ is $\mathcal{B}$'s guess. Now let $b$ be the bit chosen by the $\aname{ISOAP}$ challenger. If $b = 0$, then $c^\prime$ is indistinguishable from a correctly distributed encryption of $m_t$. It is indistinguishable due to property 3 in Definition \ref{ap:ghe} since $c^\prime \in \hat{\mathcal{C}}_{f^\ast}$ whereas a legally-generated ciphertext lies within $\mathcal{C}_{f^\ast}$. Denote $\mathcal{B}$'s advantage distinguishing both cases by $\adv{\mathcal{B}, \mathcal{E}}{\aname{IND-CT}}$. If $b = 1$, then $c^\prime$ is an encryption of a random element of $M$, which contains no information about $t$, forcing $\mathcal{B}$'s advantage to zero. Therefore, the overall advantage of $\mathcal{A}^{\vname{ISOAP}}$ is $\adv{\mathcal{B}, \mathcal{E}}{\aname{IND-CT}} + \frac{1}{2}\epsilon$.

Now we prove the reverse direction. Suppose that $\mathcal{E}$ is IND-AD-CCA1 secure. Assume for the purpose of contradiction that there is an adversary $\mathcal{A}^{\vname{ISOAP}} = (\mathcal{A}^{\vname{ISOAP}}_1, \mathcal{A}^{\vname{ISOAP}}_2)$ whose advantage is $\alpha$ against $\aname{ISOAP}$. We can use $\mathcal{A}^{\vname{ISOAP}}$ to construct an adversary $\mathcal{B}$ to attack the IND-AD-CCA1 security of $\mathcal{E}$. Firstly, $\mathcal{B}$ derives $\hat{\mathcal{C}}$, $\mathcal{F}$ and $\mathbb{G}$ from $\vname{PP}$ and passes them to $\mathcal{A}^{\vname{ISOAP}}_1$. It simulates $\mathcal{O}_{K_1}$ by forwarding a query for $f$ to its secret key oracle and responding with a description of $\sigma_f$ derived from the secret key $\vname{sk}_f$ it receives. It simulates a query to $\mathcal{O}_{\vname{SP}}$ for $(f, c)$ by (1). querying its decryption oracle for $c$ to obtain $m^\prime$; (2). computing $r \gets \psi_f(m^\prime)$ and $n \gets r^{-1} \ast c$; and (3). responding with $(r, n)$.

Let $f^\ast$ be the target index outputted by $\mathcal{A}^{\vname{ISOAP}}_1$. Subsequently, $\mathcal{B}$ chooses an attribute $a^\ast \in \fname{supp}(f^\ast)$ and forwards $a^\ast$ as its challenge attribute. Furthermore, it chooses messages $m_0, m_1 \samplefrom M$ and forwards them to its challenger who responds with a challenge ciphertext $c^\ast$. Next $\mathcal{B}$ computes $c^\prime \gets c^\ast \ast E(a^\ast, m_0)^{-1} \ast n^\prime$ where $n^\prime \samplefrom \mathcal{N}_{f^\ast}$ and hands $c^\prime$ to $\mathcal{A}^{\vname{ISOAP}}_2$. Let $b^\prime$ be the bit guessed by $\mathcal{A}^{\vname{ISOAP}}$. Then $\mathcal{B}$ outputs $b^\prime$ as its guess. Recall how a ciphertext is generated by the IND-AD-CCA1 challenger. Firstly, the challenger samples bit $t \samplefrom \{0, 1\}$. Then the challenger computes $f \gets \tau(a)$ and sets $c^\ast := \psi_f(m_t) \ast n$ where $n \samplefrom \mathcal{N}_f$.It follows by definition of $\tau$ that $\mathcal{N}_f \subseteq \mathcal{N}_{f^\ast}$. This immediately implies that $\mathcal{R}_f = \mathcal{R}_{f^\ast}$. If $t = 0$, then $c^\prime$ is distributed according to a uniformly random element from $\mathcal{N}_{f^\ast}$, which results in an advantage of $\frac{1}{2}\alpha$ provided that $\mathcal{A}^{\vname{ISOAP}}_2$ cannot distinguish between $\mathcal{C}_f$ and $\hat{\mathcal{C}_f}$ (property 3 of Definition \ref{ap:ghe}). If $t = 1$, then $c^\prime$ is a uniform in $\hat{\mathcal{C}}_f$, and the advantage of $\mathcal{B}$ in this case is also $\frac{1}{2}\alpha$ provided that $\mathcal{A}^{\vname{ISOAP}}_2$ cannot distinguish between $\mathcal{C}_f$ and $\hat{\mathcal{C}_f}$. The overall advantage is therefore $\adv{\mathcal{A}^{\vname{ISOAP}}_2, \mathcal{E}}{\aname{IND-CT}} + \alpha$.
\qed
\end{proof}

\section{Correctness Condition for a Universal Anonymizer}\label{ap:unianon}

Let $\mathcal{E}$ be a H(PE) scheme with public index, and let
$U_{\mathcal{E}} := (\mathcal{G}, \mathcal{B}, \mathcal{B}^{-1}, Q_A, Q_{\mathcal{F}})$ be a universal anonymizer for $\mathcal{E}$. Define the distributions
$D_1 := \{(\vname{PP}, \vname{MSK}, \vname{params}, c) \mid  (\vname{PP}, \vname{MSK}) \leftarrow
\aname{\mathcal{E}.Setup}(1^\lambda),  \vname{params} \gets
\mathcal{G}(\vname{PP}), c \samplefrom \mathcal{C}\}$ and
$D_2 := \{(\vname{PP}, \vname{MSK}, \vname{params}, c^\prime) \mid  (\vname{PP}, \vname{MSK}) \leftarrow
\aname{\mathcal{E}.Setup}(1^\lambda), \vname{params} \gets
\mathcal{G}(\vname{PP}) , c \samplefrom \mathcal{C}, c^\prime \gets
\mathcal{B}^{-1}(Q_A(\fname{attr}(c)), \mathcal{B}(c))\}$ where
$\fname{attr}(c)$ returns the attribute associated with $c$. The
correctness condition for a universal anonymizer $U_{\mathcal{E}}$ is
that $D_1 \cind D_2$ (computationally indistinguishability).

\section{Instantiation of a Universal Anonymizer for Main Construction}\label{ap:unianon_inst}
The techniques from \cite{AG09} can be employed to construct a
universal anonymizer for $\aname{xhIBE}$. In this paper, the
basic version of their construction is adapted.

Let $L(\lambda)$ be the maximum bit-length of identities in
$\aname{xhIBE}$. A universal anonymizer
$\vname{AG}_{\aname{xhIBE}} :=$
\[(\aname{\vname{AG}_{\aname{xhIBE}}.\mathcal{G}}, \aname{\vname{AG}_{\aname{xhIBE}}.\mathcal{B}},
\aname{\vname{AG}_{\aname{xhIBE}}.{\mathcal{B}^{-1}}}, Q_A := H, Q_f := f_{\vec{id}} \mapsto H(\vec{id}))\]
 for $\aname{xhIBE}$ based on
the techniques of Ateniese and
Gasti is given as follows:

Let $\vname{Geom}(p)$ be a geometric distribution with parameter $p$.
\begin{algorithm}
\begin{algorithmic}
\caption{$\aname{\vname{AG}_{\aname{xhIBE}}.\mathcal{G}}(\vname{PP})$}
\State
$m \gets \lambda$ \Comment $\lambda$ can be derived from $\vname{PP}$
\State
$\vname{params} := (m, \lg{N})$ \Comment (length of members
of $\hat{\mathcal{C}}$ is $2(m + 1)\cdot \lg{N}$ bits, length of $\alpha$)
\State
\Return $\vname{params}$
\end{algorithmic}
\end{algorithm}

\begin{algorithm}
\begin{algorithmic}
\caption{$\aname{\vname{AG}_{\aname{xhIBE}}.\mathcal{B}}(\vname{params}, \vec{\psi})$}
\State
Parse $\vname{params}$ as $(m, L)$
\State
Parse $\vec{\psi}$ as $(c(x), d(x), a)$
\State
$k_1, k_2 \samplefrom \vname{Geom}(\frac{1}{2})$
\State
$k_1 \gets \fname{min}(k_1, m)$.
\State
$k_2 \gets \fname{min}(k_2, m)$.
\State
$t(x), v(x) \samplefrom \ZZ_N[x]$
\State
$z_1(x) \gets c(x) + t(x)$
\State
$z_2(x) \gets d(x) + v(x)$
\For{$1 \leq i < k_1$}
\Repeat \State $t_i(x) \samplefrom \ZZ_N[x]$ \Until{$\fname{GT}(a, z_1(x) - t_i(x), N) =
  -1$}
\EndFor
\State
$t_{k_1} \gets t(x)$
\For{$1 \leq i < k_2$}
\Repeat \State $v_i(x) \samplefrom \ZZ_N[x]$ \Until{$\fname{GT}(-a, z_2(x) - v_i(x), N) =
  -1$}
\EndFor
\State
$v_{k_2} \gets v(x)$
\For{$k_1 < i \leq m$}
\State $t_i(x) \samplefrom \ZZ_N[x]$
\EndFor
\For{$k_2 < i \leq m$}
\State $v_i(x) \samplefrom \ZZ_N[x]$
\EndFor
\State
\Return $\vec{\hat{\psi}} := ((z_1(x), t_1(x), \hdots, t_m(x)), (z_2(x), v_1(x), \hdots,
v_m(x))) \in \ZZ_N[x]^{2m + 2}$
\end{algorithmic}
\end{algorithm}

\begin{algorithm}
\begin{algorithmic}
\caption{$\aname{\vname{AG}_{\aname{xhIBE}}.\mathcal{B}}^{-1}(\vname{params}, a, \vec{\hat{\psi}})$}
\State
Parse $\vname{params}$ as $(m, L)$
\State
Parse $\vec{\hat{\psi}}$ as $((z_1(x), t_1(x), \hdots, t_m(x)), (z_2(x), v_1(x), \hdots,
v_m(x)))$
\State
$i \gets 1$
\While {$\fname{GT}(a, t_i(x) - z_1(x), N) \neq 1$}
\State
$i \gets i + 1$
\EndWhile
\State
$c(x) \gets t_i(x) - z_1(x)$
\State
$i \gets 1$
\While {$\fname{GT}(-a, v_i(x) - z_2(x), N) \neq 1$}
\State
$i \gets i + 1$
\EndWhile
\State
$d(x) \gets v_i(x) - z_2(x)$
\State
\Return $(c(x), d(x), a)$
\end{algorithmic}
\end{algorithm}

Let the set of valid ciphertexts $\mathcal{C}$ be defined as $\{(c(x),
d(x), a) \in \ZZ_N[x]^2 \times \ZZ_N : c(x) \in G_a, d(x) \in G_{-a}\}$.
Then for any $(\vname{PP}, \vname{MSK}) \gets
\aname{xhIBE.Setup}(1^\lambda)$ and $\vname{params} \gets \aname{\vname{AG}_{\aname{xhIBE}}.\mathcal{G}}(\vname{PP})$:
the correctness condition in Appendix \ref{ap:unianon} is trivially
satisfied since $\forall \vec{\psi} := (c(x),
d(x), a) \in \mathcal{C}$
\begin{equation*}
\vec{\psi} = \aname{\vname{AG}_{\aname{xhIBE}}.\mathcal{B}}^{-1}(a, \aname{\vname{AG}_{\aname{xhIBE}}.\mathcal{B}}(\vec{\psi}))
\end{equation*}
We can apply the transformation
\begin{equation*}
\aname{xhIBE}^\prime \gets
T_{\vname{AG}_{\aname{xhIBE}}}(\aname{xhIBE})
\end{equation*}
 described in
  the last section to obtain a scheme $\aname{xhIBE}^\prime$. The
  scheme in \cite{AG09} is shown to satisfy a security definition
  (ANON-IND-ID-CPA) in the random oracle model that is stronger than the
  attribute-hiding definition for IBE in the random oracle model. It
  can be easily shown with the help of Corollary \ref{cor:dist}
  that $\aname{xhIBE}^\prime$ is an attribute-hiding
  $\vname{HPE}_{\bar{U}}$ scheme for the IBE functionality supporting the
  group homomorphism $(\ZZ_2, +)$.

\end{document}